\newcommand{\copyrightstatement}{
    \begin{textblock}{0.84}(0.08,0.95) 
         \noindent
         \footnotesize
         \copyright 2021 IEEE. Personal use of this material is permitted. Permission from IEEE must be obtained for all other uses, in any current or future media, including reprinting/republishing this material for advertising or promotional purposes, creating new collective works, for resale or redistribution to servers or lists, or reuse of any copyrighted component of this work in other works. DOI: 10.1109/TCOMM.2021.3110871.
    \end{textblock}
}
\newtheorem{theorem}{{Theorem}}
\begin{document}
\copyrightstatement
\title{Modeling and Performance Analysis\\ of OAM-NFC Systems}


\author{\IEEEauthorblockN{Runyu Lyu, \textit{Student Member, IEEE}, Wenchi Cheng, \textit{Senior Member, IEEE}, and Wei Zhang, \textit{Fellow, IEEE}~}

\thanks{Part of this work was presented in IEEE International Conference on Communications, 2020~\cite{OAM_NFC_ICC}.

This work was supported by the National Natural Science Foundation of China (No. 61771368), Foundation of CETC Key Laboratory of Data Link Technology (CLDL-20182411), Key Area Research and Development Program of Guangdong Province under grant No. 2020B0101110003, and in part by Shenzhen Science \& Innovation Fund under Grant JCYJ20180507182451820.

Runyu Lyu and Wenchi Cheng are with Xidian University, Xi'an, 710071, China (e-mails: rylv@stu.xidian.edu.cn; wccheng@xidian.edu.cn).

Wei Zhang is with the School of Electrical Engineering and Telecommunications, the University of New South Wales, Sydney, Australia (e-mail: w.zhang@unsw.edu.au)
}

%
%
}

%



\IEEEtitleabstractindextext{%
\begin{abstract}
Due to its low energy consumption and simplicity, near field communication (NFC) has been extensively used in various short-range transmission scenarios, for example, proximity payment and NFC entrance guard. However, the low data rate of NFC limits its application in high rate demanded scenarios, such as high-resolution fingerprint identification and streaming media transmission as well as the future promising high rate indoor communications among pads, phones, and laptops. In this paper, we model and analyze the performance of the orbital angular momentum based NFC (OAM-NFC) system, which can significantly increase the capacity of NFC. We first give the OAM system model. With coils circularly equipped at the transmitter and receiver, OAM-NFC signals can be transmitted, received, and detected. Then, we develop the OAM-NFC generation and detection schemes for NFC multiplexing transmission. We also analyze the OAM-NFC channel capacity and compare it with those of single-input-single-output (SISO) as well as multi-input-multi-output (MIMO) NFC. Simulation results validate the feasibility and capacity enhancement of our proposed OAM-NFC system. How different variables, such as the transceiver misalignment, the numbers of transceiver coils, and transceiver distance, impact the OAM-NFC capacity are also analyzed.
\end{abstract}

\begin{IEEEkeywords}
Near field communication (NFC), orbital angular momentum (OAM), high-data-rate short-range transmission.
\end{IEEEkeywords}}

\maketitle

\IEEEdisplaynontitleabstractindextext

%
\IEEEpeerreviewmaketitle

\section{Introduction}
\IEEEPARstart{N}{ear} field communication (NFC), which is based on the near field magnetic induction\cite{NFMC} at a distance much less than a wavelength of the carrier, is a kind of simple and efficient short-range wireless communication\cite{NFC,NFC_overview}. Devices with NFC can achieve contactless, point-to-point, and low-power-consumption data transmission in a very short range\cite{NFC_cellphone}. The short communication range facilitates NFC against various attacks, such as packet capturing, malicious eavesdropping, and wireless interference. Thus, many security-sensitive applications, such as proximity payment, magnetic stripe ID card, and NFC entrance guard, rely on NFC technology\cite{NFC_security}. Moreover, different from the conventional radio frequency (RF) communications, which are based on electromagnetic (EM) waves, NFC is much more reliable in high permittivity channels, such as underwater or underground channels.

According to ISO/IEC 18092, the data rate of NFC is $106$, $212$, or $424$ kbps, while the maximum data rate of NFC is $848$ kbps\cite{NFC_data_rate}. This data rate can support conventional NFC applications, such as the proximity payment and NFC entrance guard, where a high data rate is not required. However, with the emerging of diverse mobile data transmission applications, it is highly demanded but very challenging to meet the capacity enhancement requirement \cite{What_will_5G_be}. The current low data rate of NFC limits its application in future high rate demanded short-range contactless data transmission scenarios, such as the streaming media transmission and high-resolution fingerprint information identification with some complex security codes\cite{Trend_of_6G,NFC_Capacity}. Also, the low data rate prevents us from taking advantage of NFC's reliability under high permittivity channels, such as underground soil medium, oil reservoirs, and mines.

Some schemes can increase the capacity for NFC systems. For instance, high-data-rate NFC can be achieved by combining NFC with Bluetooth or WiFi, which uses NFC to establish a link and utilizes Bluetooth or WiFi to transfer data. However, the relatively low security and the RF interference make this combination not a well-applied solution\cite{NFMC}. Another scheme to increase the capacity for NFC is integrating multi-input-multi-output (MIMO) technology into conventional single-input-single-output (SISO) NFC system\cite{MIMO_NFC}. However, the crosstalks caused by the inductive coupling among transmit coils can severely reduce the capacity of MIMO-NFC system\cite{MIMO_NFC_nocross}. Also, NFC technology is often used for line of sight (LOS) scenarios, which are with high channel correlation and low rank thus not fitting for MIMO-based schemes.

Using orbital angular momentum (OAM) carried EM waves to transmit information is a novel way that can increase the capacity without additional power and bandwidth for RF transmissions\cite{oam_low_freq_radio,oam_for_wireless_communication,oam_mode_division,DBLP:journals/jcin/LiangLCZ20}. Thanks to the orthogonal wavefronts of OAM based EM waves, many researchers have been considering OAM\cite{oam_light} as a tool for multiplexing signals transmission in the last decade. Studies and experiments validate that OAM technology can be employed in practical RF wireless communication systems and increase the capacity\cite{oam_generation_detection,oam_mode_modu,oam_multiplexing,oam_embedded_massive_mimo,oam_mimo_capacity,oam_deeplearing,oam_planespiral,DBLP:journals/jcin/GaoC019}. Moreover, because of the strict wavefront limitation, OAM based wireless communication is preferred to be used for LOS scenarios\cite{oam_hollow,oam_missaligned,oam_direct_generation}, such as data center wireless communication. Since the capacity enhancement achieved by OAM is not based on multipath but the orthogonal wavefronts, the strong correlation of the channel does not decrease the capacity. Thus, it is highly promising to use OAM for capacity enhancement in NFC scenarios as in RF scenarios\cite{oam_MIMO}.

However, unlike RF communication that relies on electromagnetic wave propagation, NFC is achieved by magnetic field induction between transmit and receive coils at distances much less than a wavelength of the carrier (NFC is used at distances under $10$ cm while with a wavelength of $22$ m), which makes the NFC channel completely different from the RF channel and makes it not known whether OAM can be used in NFC. Also, whether it can bring capacity enhancement for NFC has not been proven. To evaluate the feasibility of OAM-NFC transmission scheme and its capacity enhancement, in this paper, we model and analyze the performance of the OAM based NFC (OAM-NFC) system. Based on the OAM-NFC system model, we derive the OAM-NFC mutual inductance channel matrix. The difference between RF wireless transmission and NFC is that RF transmission is based on EM wave propagation at a distance farther than a wavelength of the carrier, while NFC is based on the magnetic induction between each transmit-receive pair coils at a distance much less than a wavelength. This difference makes it difficult to analyze the multi-coil mutual inductance and model the multi-coil channel as multi-antenna channels in RF transmission, which makes modeling the OAM-NFC mutual inductance channel one of our major theoretical breakthroughs. Furthermore, based on the channel matrix, the OAM-NFC generation and detection schemes are proposed. The crosstalks caused by inductive coupling among transmit coils are validated not a barrier for the recovery of the input signals when the transmit and the receive coil rings align with each other. Finally, we validate the feasibility and capacity enhancement of our proposed OAM-NFC system with simulations. Also, we analyze how different variables, such as the numbers of transceiver coils and transceiver distance, impact the channel capacity.

The rest of the paper is organized as follows. Section~\ref{sec:System_model} gives the OAM-NFC system model. In Section~\ref{sec:Scheme}, the OAM-NFC generation and detection schemes are proposed. Section~\ref{sec:Capacity} gives the OAM-NFC capacity and analyses. Channel capacities of SISO-NFC and MIMO-NFC are also given for comparison. In Section~\ref{sec:Simulation}, we give simulation results that validate the feasibility and capacity enhancement of the OAM-NFC system. The impacts of the transceiver misalignment, coil numbers, transceiver distance, transceiver coil ring radii, and coil radii on the OAM-NFC capacity are also evaluated. This paper concludes with Section~\ref{sec:Conclusion}.

\section{System Model}\label{sec:System_model}
\begin{figure}[htbp]
\centering
\includegraphics[scale=0.45]{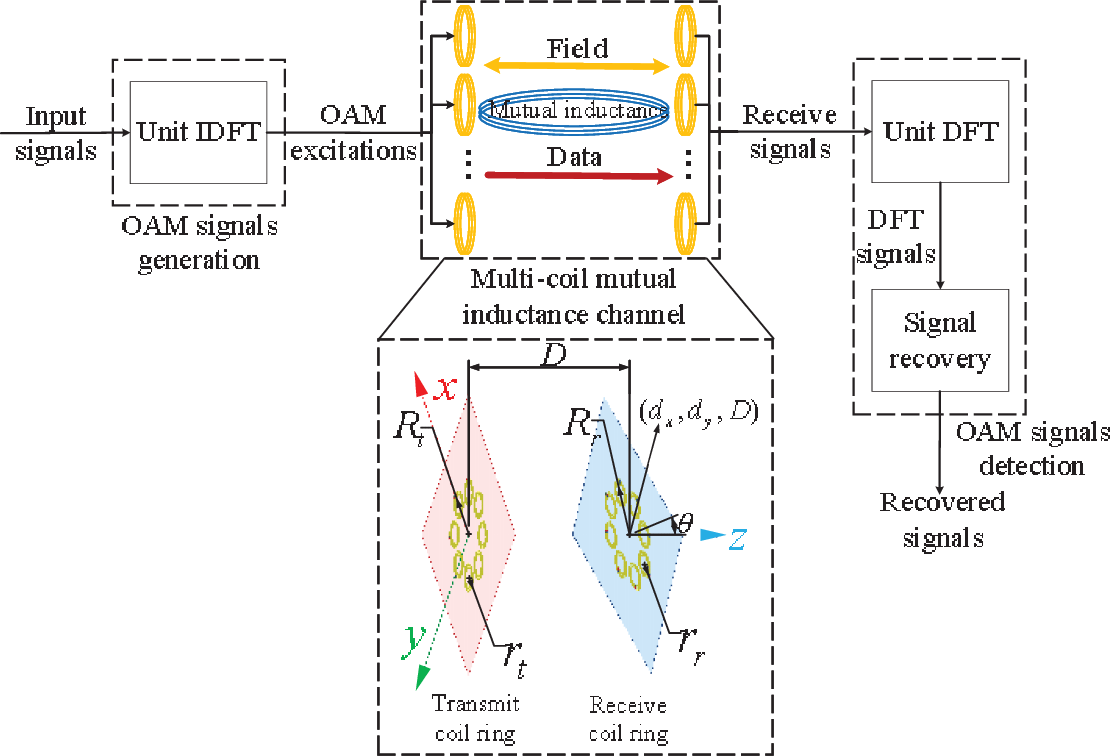}
\caption{The OAM-NFC system model.} \label{fig:system_model}
\end{figure}
Figure~\ref{fig:system_model} shows our proposed OAM-NFC system model, where the transmit and receive antennas are coil rings with $N_t$ and $N_r$ coils, respectively. We chose coil rings as the transceivers due to their flexibility and controllability of digitally generating beams with multiple OAM modes simultaneously. The input signals are first transformed to the OAM excitations on transmit coils using unit inverse discrete Fourier transform (IDFT). This transform is equal to simultaneously feeding the equidistantly distributed transmit coils around the transmit ring by equal-amplitude excitations with linearly increased phases. The phase of adjacent coils linear increases by $2\pi l/N_t$, where $l$ is the index of OAM mode ($0\hspace{-0.1cm}\le\hspace{-0.1cm}l\hspace{-0.1cm}\le\hspace{-0.1cm}N_t\hspace{-0.1cm}-\hspace{-0.1cm}1$). Then, the received induced signals are sent to the signal recovery after unit discrete Fourier transform (DFT).

Figure~\ref{fig:system_model} also shows the multi-coil inductance channel. The centers of the transmit and the receive coils formulate the transmit uniform ring and the receive uniform ring, respectively. All the normals of the transmit and the receive coils are parallel, respectively. We denote by $R_t$ and $R_r$ the radii of transmit and receive coil rings, respectively. The radii of transmit and receive coils are denoted by $r_t$ and $r_r$, respectively. We also denote by $K_t$ and $K_r$ the numbers of turns for each transmit and each receive coil. The coordinate system is defined as follows. The $x$-axis is set from the center of the transmit coil ring to the center of one of the transmit coils; the $z$-axis is set along the axis of the transmit coil ring towards the transmission direction; the $y$-axis is given following the right-hand spiral rule. The center of transmit coil ring is located at the coordinate origin while $(d_x,d_y,D)$ denotes the location of receive coil ring center. $\theta_x$ and $\theta_y$($0\hspace{-0.1cm}\le\hspace{-0.1cm}\theta_x, \theta_y\hspace{-0.1cm}\le\hspace{-0.1cm}\pi/2$) denote the angle between the transmit and the receive coil ring normal lines along $x$ and $y$ axes, respectively. The deflection angle between the transmit and the receive coil ring normal lines, denoted by $\theta$, can be given by $\theta\hspace{-0.1cm}=\hspace{-0.1cm}\arctan\sqrt{\tan^2\theta_x+\tan^2\theta_y}$. We number the transmit coil, of which the center is in the positive $x$-axis, the $1$st transmit coil. The receive coil, which is opposite to the $1$st transmit coil, is set as the $1$st receive coil. The rest transmit and receive coils are numbered from $2$ to $N_t$ and $N_r$ in a clockwise order viewed along $z$-axis, respectively.

\section{OAM-NFC Generation and Detection Schemes}\label{sec:Scheme}
In this section, we propose the OAM-NFC generation and detection schemes without and with channel estimation. Then, we show the feasibility of our proposed schemes.

\begin{figure*}[tp]
\setcounter{equation}{3}
\begin{subequations}
\begin{numcases}{}
M_{m,n}=\frac{\mu_0K_tK_rr_tr_r}{4\pi}\hspace{-0.2cm}\int_0^{2\pi}\hspace{-0.2cm}\int_0^{2\pi}
\Big[r_t^2+r_r^2+d_{m,n}^2+z_m^2+2r_r\cos\phi\left(d_{m,n}\cos\theta-z_m\sin\theta\right)\nonumber\\
\hspace{5.2cm}+2r_tr_r\left(\sin\phi\cos t-\cos\phi\sin t\cos\theta\right)-2d_{m,n}r_t\sin t\Big]^{-\frac{1}{2}}\nonumber\\
\hspace{5.6cm}\left(\sin t\cos \phi-\cos t\sin\phi \cos\theta\right) d\phi dt;
\label{eq:M}\\
M^t_{n_1,n_2}=\frac{\mu_0K_t^2r_t^2}{4\pi}\hspace{-0.2cm}\int_0^{2\pi}\hspace{-0.2cm}\int_0^{2\pi}
\Big[2r_t^2+(d'_{n_1,n_2})^2+2r_t^2\left(\sin\phi\cos t-\cos\phi\sin t\right)\nonumber\\
\hspace{4.5cm}+2r_td'_{n_1,n_2}\left(\cos\phi-\sin t\right)\Big]^{-\frac{1}{2}}\hspace{-0.2cm}\left(\sin t\cos \phi-\cos t\sin\phi\right) d\phi dt,
\label{eq:Mt}
\end{numcases}
\label{eq:M_Mt}
\end{subequations}
\hrulefill
\end{figure*}
\subsection{OAM-NFC Generation Scheme}
For the OAM-NFC generation, we denote by $\boldsymbol{\mathrm x}\hspace{-0.1cm}=\hspace{-0.1cm}\left[x_1,x_2,\cdots,x_{N_t}\right]^{T}$ the input signals, where ``$\left[\cdot\right]^T$" denotes the transpose operation. Since OAM-carrying vortex beams can be generated by a circular antenna array with equidistant elements using IDFT in RF domain\cite{oam_low_freq_radio}, the source voltages across the transmit coils, represented by $\boldsymbol{\mathrm v}^t\hspace{-0.1cm}=\hspace{-0.1cm}\left[v^t_1,v^t_2,\cdots,v^t_{N_t}\right]^{T}$, is given as follows:
\begin{align}
\setcounter{equation}{0}
\boldsymbol{\mathrm v}^t=\boldsymbol{\mathrm W}\boldsymbol{\mathrm x},
\label{eq:vt_x}
\end{align}
where $\boldsymbol{\mathrm W}$ is the unit IDFT matrix with the $n_1$th row and the $n_2$th column element given by ${W}_{n_1,n_2}\hspace{-0.1cm}=\hspace{-0.1cm}\frac{1}{\sqrt N_t}{\rm exp}[j{2\pi (n_1-1)(n_2-1)}/{N_t}]$. Taking the mutual inductance additive white Gaussian noise (AWGN)\cite{MI_WhiteNoise,MI_multi_access,MI_Underwater_overveiw} into consideration associated with Eq.~\eqref{eq:vt_x}, the received induced voltages across the receive coils, denoted by $\boldsymbol{\mathrm v}^r\hspace{-0.1cm}=\hspace{-0.1cm}\left[v^r_1,v^r_2,\cdots,v^r_{N_r}\right]^{T}$, can be given as follows:
\begin{align}
\boldsymbol{\mathrm v}^r=\boldsymbol{\mathrm H}\boldsymbol{\mathrm W}\boldsymbol{\mathrm x}+\boldsymbol{\mathrm n},
\end{align}
where $\boldsymbol{\mathrm n}\hspace{-0.1cm}=\hspace{-0.1cm}\left[n_1,\cdots,n_2,\cdots,n_{N_t}\right]^{T}$ is the AWGN corresponding to each receive coil and $\boldsymbol{\mathrm H}$ denotes the multi-coil mutual inductance channel given by Theorem~\ref{the:channel_matrix}.
\begin{theorem}
The multi-coil mutual inductance channel matrix is
\begin{align}
\boldsymbol{\mathrm H}=\frac{-j\omega}{Z_t}\boldsymbol{\mathrm M}-\frac{\omega^2}{Z_t^{2}}\boldsymbol{\mathrm M}\boldsymbol{\mathrm M}^t,
\label{eq:channel_matrix}
\end{align}
where $j$ is the imaginary symbol, $\omega$ is the angular frequency, $Z_t$ denotes the impedance of each transmit coil, $\boldsymbol{\mathrm M}$ denotes the mutual inductance between each pair of transmit coil and receive coil, and $\boldsymbol{\mathrm M}^t$ denotes the mutual inductance between each two transmit coils. $Z_t$ can be further given by $Z_t=R+{1}/{j\omega C}+j\omega L_t$, where $L_t$ denotes the self-inductance of each transmit coil. Since $r_t$ is relatively small, $L_t$ can be given by $L_t={\mu_0K^2_t\pi r_t}/{2}$. $C$ and $R$ denote the capacitance and resistance of each transmit coils, respectively. $C$ and $R$ are given as $C = 1/(\sqrt{w}L_t)$ and $R = 2\pi r_tK_t R_0 /S$, where $R_0$ denotes the resistivity depending on the type of the coil line and $S$ is the cross-sectional area of the coil line. The $m$th row and the $n$th column element of $\boldsymbol{\mathrm M}$, denoted by $M_{m,n}$ with $1\hspace{-0.1cm}\le\hspace{-0.1cm} n\hspace{-0.1cm}\le\hspace{-0.1cm} N_t$ and $1\hspace{-0.1cm}\le\hspace{-0.1cm} m\hspace{-0.1cm}\le\hspace{-0.1cm} N_r$, as well as the $n_2$th row and the $n_1$th column element of $\boldsymbol{\mathrm M}^t$, denoted by $M^t_{n_1,n_2}$ with $1\hspace{-0.1cm}\le\hspace{-0.1cm} n_1\ne n_2\hspace{-0.1cm}\le\hspace{-0.1cm} N_t$, are given as Eq.~\eqref{eq:M_Mt},
where $\mu_0$ is the magnetic permeability of vacuum, $d_{m,n}$ represents the distance from the center of the $m$th receive coil to the axis of the $n$th transmit coil, $d'_{n_1,n_2}$ denotes the distance between the centers of the $n_1$th transmit coil and the $n_2$th transmit coil, $z_m$ denotes the $z$ coordinate for the $m$th receive coil, and $\theta$ denotes the angle between the axes of the two coils. The main diagonal elements of $\boldsymbol{\mathrm M}^t$ are set as $0$.
\label{the:channel_matrix}
\end{theorem}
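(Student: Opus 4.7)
The plan is to derive the matrix equation $\boldsymbol{\mathrm{H}}$ from two circuit-theoretic relations and then obtain the integral expressions for $M_{m,n}$ and $M^t_{n_1,n_2}$ from the Neumann mutual-inductance formula applied to the specified geometry. First I would write Kirchhoff's voltage law on the transmit side. Each transmit coil carries a current $i^t_n$ driven by $v^t_n$; its self-impedance is $Z_t$ and it is inductively coupled to every other transmit coil through $M^t_{n,n'}$, while the currents in the receive coils are negligible back-reactions on the transmit side. This gives $\boldsymbol{\mathrm v}^t=(Z_t\boldsymbol{\mathrm I}+j\omega\boldsymbol{\mathrm M}^t)\boldsymbol{\mathrm i}^t$. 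On the receive side, Faraday's law yields the open-circuit induced EMFs $\boldsymbol{\mathrm v}^r=-j\omega\boldsymbol{\mathrm M}\boldsymbol{\mathrm i}^t$.

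Next I would invert the transmit-side matrix. Since the inter-coil mutual inductances are small compared with the self-impedance in the NFC regime of interest, I would write
\begin{align}
(Z_t\boldsymbol{\mathrm I}+j\omega\boldsymbol{\mathrm M}^t)^{-1}=\tfrac{1}{Z_t}\bigl(\boldsymbol{\mathrm I}+\tfrac{j\omega}{Z_t}\boldsymbol{\mathrm M}^t\bigr)^{-1}\approx\tfrac{1}{Z_t}\boldsymbol{\mathrm I}-\tfrac{j\omega}{Z_t^{2}}\boldsymbol{\mathrm M}^t,\nonumber
\end{align}
keeping terms up to first order. Substituting into $\boldsymbol{\mathrm v}^r=-j\omega\boldsymbol{\mathrm M}\boldsymbol{\mathrm i}^t$ and collecting terms yields precisely $\boldsymbol{\mathrm H}=-\tfrac{j\omega}{Z_t}\boldsymbol{\mathrm M}-\tfrac{\omega^{2}}{Z_t^{2}}\boldsymbol{\mathrm M}\boldsymbol{\mathrm M}^t$. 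The expression for $Z_t$ and the standard thin-loop self-inductance $L_t=\mu_0 K_t^{2}\pi r_t/2$ then complete the first part of the theorem.

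For the second part I would invoke the Neumann formula $M=\tfrac{\mu_0 K_1 K_2}{4\pi}\oint\oint\tfrac{d\boldsymbol{l}_1\cdot d\boldsymbol{l}_2}{|\boldsymbol{r}_1-\boldsymbol{r}_2|}$ for each pair of coils. I would parametrize the transmit coil $n$ by an angle $t\in[0,2\pi]$ in its own plane, and the receive coil $m$ by $\phi\in[0,2\pi]$ in its plane, which is tilted by $\theta$ relative to the transmit plane. Using the coordinate system fixed in Section~II, the receive-coil center lies at axial offset $z_m$ and lateral offset $d_{m,n}$ from the axis of transmit coil $n$; expanding $|\boldsymbol{r}_1-\boldsymbol{r}_2|^2$ in these variables produces the bracketed square-root denominator in \eqref{eq:M}, while the tangent-vector dot product contributes the factor $(\sin t\cos\phi-\cos t\sin\phi\cos\theta)$. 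The formula \eqref{eq:Mt} for two transmit coils is the same construction with $\theta=0$, $z_m=0$, and the inter-center distance $d'_{n_1,n_2}$ replacing $d_{m,n}$; the different angular offset of the two ring elements produces the $(\cos\phi-\sin t)$ term. The diagonal entries of $\boldsymbol{\mathrm M}^t$ are excluded because the self term is already absorbed into $L_t$ inside $Z_t$.

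The main obstacle will be the geometry in the last step: correctly projecting the receive-coil parametrization into the transmit-centered frame so that the two separations ($d_{m,n}$ in the transverse plane and $z_m$ along the axis) appear with the right trigonometric weighting by $\theta$ once the loops are tilted. The matrix inversion step is routine provided the small-coupling approximation is justified in the targeted NFC operating regime, which should be verified a posteriori from the numerical values of $\omega M^t$ versus $|Z_t|$.
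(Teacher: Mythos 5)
Your proposal is correct and lands on the same formula, but it formalizes the circuit step differently from the paper. The paper never writes the coupled impedance matrix or inverts anything: it performs a single pass of voltage superposition, first computing the crosstalk voltage induced on transmit coil $n_1$ by the \emph{source} voltages of the other transmit coils, $v_{n_1,n_2}^s=\frac{-j\omega M^t_{n_1,n_2}}{Z_t}v^t_{n_2}$, then mapping the resulting total transmit voltages to the receive coils via $v_{m,n}=\frac{-j\omega M_{m,n}}{Z_t}v_n$ and summing, which yields $\boldsymbol{\mathrm v}^r=\frac{-j\omega}{Z_t}\boldsymbol{\mathrm M}\boldsymbol{\mathrm v}^t-\frac{\omega^2}{Z_t^2}\boldsymbol{\mathrm M}\boldsymbol{\mathrm M}^t\boldsymbol{\mathrm v}^t$ directly. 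Your route --- writing $\boldsymbol{\mathrm v}^t=(Z_t\boldsymbol{\mathrm I}+j\omega\boldsymbol{\mathrm M}^t)\boldsymbol{\mathrm i}^t$, $\boldsymbol{\mathrm v}^r=-j\omega\boldsymbol{\mathrm M}\boldsymbol{\mathrm i}^t$, and truncating the Neumann series of the inverse at first order --- is equivalent but has the virtue of making explicit that the theorem's $\boldsymbol{\mathrm H}$ is the first-order approximation of an exact self-consistent solution, an approximation the paper leaves implicit (and only justifies later by noting the $\boldsymbol{\mathrm M}\boldsymbol{\mathrm M}^t$ term is numerically many orders of magnitude smaller); your closing remark that the smallness of $\omega M^t/|Z_t|$ must be checked a posteriori is exactly the right caveat. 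For the integral expressions, you propose to derive Eqs.~\eqref{eq:M} and \eqref{eq:Mt} from the Neumann formula by parametrizing the two loops, which is indeed their origin, whereas the paper skips this entirely: its appendix starts from the already-simplified elliptic-integral forms of $M_{m,n}$ and $M^t_{n_1,n_2}$, citing the literature for that reduction. So your plan proves slightly more than the paper does, at the cost of the nontrivial tilted-loop geometry you correctly flag as the main obstacle; your identification of why the diagonal of $\boldsymbol{\mathrm M}^t$ is zeroed (self-inductance absorbed into $Z_t$ through $L_t$) matches the paper's convention.
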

\begin{proof}
See Appendix~\ref{pro:multi-coil channel}.
\end{proof}

\subsection{OAM-NFC Detection Scheme Without Channel Estimation}
Next, we propose the OAM-NFC signals detection scheme without channel estimation for aligned transceivers and $N_r\hspace{-0.1cm}=\hspace{-0.1cm} IN_t$, where $I$ is a positive integer. In the first step, we take the mean for every $I$ rows in $\boldsymbol{\mathrm H}$ and formulate a circulant matrix $\hat{\boldsymbol{\mathrm H}}$ so that the channel matrix can be decomposed into a diagonal matrix by DFT matrix. $\hat{\boldsymbol{\mathrm H}}$ is given as follows:
\small{
\begin{align}
\setcounter{equation}{4}
\hat{\boldsymbol{\mathrm{H}}}\hspace{-0.1cm}=\hspace{-0.06cm}&\frac{1}{I}\hspace{-0.1cm}\sum_{i=1}^I\hspace{-0.1cm}
\begin{bmatrix}
{  H}_{i,1} & {H}_{i,n_2} & \cdots & {  H}_{i,N_t}\\
{  H}_{i+I,1} & {H}_{i+I,n_2} & \cdots & {  H}_{i+I,N_t}\\
\vdots & \vdots & \ddots & \vdots \\
{  H}_{i+I(n_1-1),1} & {H}_{i+I(n_1-1),n_2} & \cdots & {  H}_{i+I(n_1-1),N_t}\\
\vdots & \vdots & \ddots & \vdots \\
{  H}_{i+I(N_t-1),1} & {H}_{i+I(N_t-1),n_2} & \cdots & {  H}_{i+I(N_t-1),N_t}
\end{bmatrix},
\label{eq:H}
\end{align}
}
where ${H}_{i+I(n_1-1),n_2}$ is the [$i+I(n_1-1)$]th row and the $n_2$th column element of $\boldsymbol{\mathrm {H}}$. We also take the mean for every $I$ elements in $\boldsymbol{\mathrm v}^r$ and $\hat{\boldsymbol{\mathrm n}}$. New vectors are denoted by $\hat{\boldsymbol{\mathrm v}}^r \hspace{-0.1 cm}=\hspace{-0.1 cm} \frac{1}{I}\sum_{i=1}^I\left[ v^r_i, v^r_{i+I}, \cdots, v^r_{i+I(N_t-1)} \right]^T$ and $\hat{\boldsymbol{\mathrm n}} \hspace{-0.1 cm}=\hspace{-0.1 cm} \frac{1}{I}\sum_{i=1}^I\left[ n_i, n_{i+I}, \cdots, n_{i+I(N_t-1)} \right]^T$, respectively.
Clearly, $\hat{\boldsymbol{\mathrm v}}^r=\hat{\boldsymbol{\mathrm H}}\boldsymbol{\mathrm W}\boldsymbol{\mathrm x}+\hat{\boldsymbol{\mathrm n}}$. For mutual inductance AWGN channel, $\boldsymbol{\mathrm n}$ is a vector with independent and identically distributed elements following $\mathcal{CN}(0,N_0)$, where $N_0$ is the noise power spectral density. Thus, $\hat{\boldsymbol{\mathrm n}}$ is a vector with independent and identically distributed elements following $\mathcal{CN}(0,N_0/I)$.

In the second step, we make $N_t$-point unit DFT on $\hat{\boldsymbol{\mathrm v}}^r$. The signals after DFT, denoted by $\boldsymbol{\mathrm y}$, is given as follows:
\begin{align}
\boldsymbol{\mathrm y}&=\boldsymbol{\mathrm W}^H\hat{\boldsymbol{\mathrm v}}^r
=\boldsymbol{\mathrm W}^H\hat{\boldsymbol{\mathrm H}}\boldsymbol{\mathrm W}\boldsymbol{x}+\boldsymbol{\mathrm W}^H\hat{\boldsymbol{\mathrm n}},
\label{eq:y}
\end{align}
where $\boldsymbol{\mathrm W}^H$ is the conjugate transpose of $\boldsymbol{\mathrm W}$. For simplifying the following description, let
\begin{align}
\boldsymbol{\mathrm H}_{\rm {OAM}}=\boldsymbol{\mathrm W}^{H}\hat{\boldsymbol{\mathrm H}}\boldsymbol{\mathrm W}
\end{align}
denote the OAM-NFC channel matrix with its diagonal, denoted by $\boldsymbol{\mathrm h}_{\rm {OAM}}$, given as follows:
\begin{align}
\boldsymbol{\mathrm h}_{\rm {OAM}}={\rm diag}\left(\boldsymbol{\mathrm H}_{\rm {OAM}}\right).
\end{align}

In the third step, we recover the input signals, denoted by $\hat{\boldsymbol{\mathrm x}}$, from $\boldsymbol{\mathrm y}$ as follows:
\begin{align}
\hat{\boldsymbol{\mathrm x}}=\left[{\rm{Diag}}(\boldsymbol{\mathrm h}_{\rm {OAM}})\right]^{\dag}\boldsymbol{\mathrm y},
\label{eq:xrecover}
\end{align}
where ``$[\cdot]^{\dag}$" is the pseudo-inverse operation and ${\rm{Diag}}(\boldsymbol{\mathrm h}_{\rm {OAM}})$ denotes the diagonal matrix created by using the elements of $\boldsymbol{\mathrm h}_{\rm {OAM}}$. Finally, the output signal of the OAM-NFC system, denoted by $\boldsymbol{\tilde{\mathrm x}}$, can be given as follows:
\begin{align}
\boldsymbol{\tilde{\mathrm x}}\hspace{-0.1 cm}=\hspace{-0.1 cm}\mathop{\arg\min}\limits_{\boldsymbol{\mathrm x}\in \mathbb{C}_{N_t}}\left\Arrowvert{\hat{\boldsymbol{\mathrm x}}-\boldsymbol{\mathrm x}}\right\Arrowvert^2\hspace{-0.1 cm}=\hspace{-0.1 cm}\mathop{\arg\min}\limits_{\boldsymbol{\mathrm x}\in \mathbb{C}_{N_t}}\left\Arrowvert{\left[{\rm{Diag}}(\boldsymbol{\mathrm h}_{\rm {OAM}})\right]^{\dag}\boldsymbol{\mathrm W}^H\hat{\boldsymbol{\mathrm v}}^r \hspace{-0.1 cm}-\hspace{-0.1 cm}\boldsymbol{\mathrm x}}\right\Arrowvert^2,
\label{eq:judgment}
\end{align}
where $\mathbb{C}_{N_t}$ is the signal constellation. In fact, the detection in Eqs.~\eqref{eq:xrecover} and \eqref{eq:judgment} does not provide the optimal detection if the alignment is not perfect. Thus, we will combine the OAM-NFC detection with a simple channel estimation method later in this section to deal with the imperfect alignment.

\subsection{Feasibility of OAM-NFC Detection Schemes Without Channel Estimation}
To validate that input signals can be recovered by Eq.~\eqref{eq:xrecover}, we first prove that $\boldsymbol{\mathrm H}$ is a block circulant matrix for aligned transceivers with the following Theorem~\ref{the:circulant_matrix}.
\begin{theorem}
For aligned transmit and receive coil rings, $\boldsymbol{\mathrm H}$ is a block circulant matrix, given as follows:
\begin{align}
{\bf{H}} = \left[ {\begin{array}{*{20}{c}}
{{{\bf{h}}_1}}&{{{\bf{h}}_2}}& \ldots &{{{\bf{h}}_{{N_t}}}}\\
{{{\bf{h}}_{{N_t}}}}&{{{\bf{h}}_1}}& \ldots &{{{\bf{h}}_{{N_t} - 1}}}\\
 \vdots & \vdots & \ddots & \vdots \\
{{{\bf{h}}_2}}&{{{\bf{h}}_3}}& \ldots &{{{\bf{h}}_1}}
\end{array}} \right],
\end{align}
where $\boldsymbol{\mathrm h}_i\hspace{-0.1cm}=\hspace{-0.1cm}[H_{1,i},H_{2,i},\cdot,H_{I,i}]^T$, $1\hspace{-0.1cm}\le\hspace{-0.1cm}i\hspace{-0.1cm}\le\hspace{-0.1cm}N_t$.
\label{the:circulant_matrix}
\end{theorem}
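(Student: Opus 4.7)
The plan is to exploit the $N_t$-fold discrete rotational symmetry that the aligned geometry enjoys. When $d_x = d_y = 0$ and $\theta = 0$, the two coil rings are coaxial and parallel, with $N_t$ and $N_r = IN_t$ equispaced coils respectively. Rotation of the whole system by $2\pi/N_t$ about the $z$-axis is therefore an isometry that sends transmit coil $n$ to the former position of transmit coil $n+1$ (indices mod $N_t$) and sends receive coil $m$ to the former position of receive coil $m+I$ (indices mod $N_r$). Since the geometric parameters entering the defining integrals for $M_{m,n}$ and $M^t_{n_1,n_2}$ --- the radial offset $d_{m,n}$, the axial coordinate $z_m = D$, and the deflection angle $\theta = 0$ --- depend only on the relative placement of the two interacting coils, this isometry leaves each integrand invariant.

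First, I would record the two shift identities that this symmetry yields:
\[
M_{m+I,\,n+1} = M_{m,n}, \qquad M^t_{n_1+1,\,n_2+1} = M^t_{n_1,n_2},
\]
where indices wrap around modulo $N_r$ and $N_t$, and where the convention $M^t_{n,n}=0$ is preserved because the diagonal is fixed as a set under $n \mapsto n+1$. The first identity is exactly the statement that $\boldsymbol{\mathrm M}$ is block-circulant when partitioned into $I \times 1$ column blocks; the second states that $\boldsymbol{\mathrm M}^t$ is a standard $N_t \times N_t$ circulant matrix.

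Next, I would propagate these symmetries through the expression given in Theorem~\ref{the:channel_matrix}. The product $\boldsymbol{\mathrm M}\boldsymbol{\mathrm M}^t$ inherits the block-circulant structure via
\[
(\boldsymbol{\mathrm M}\boldsymbol{\mathrm M}^t)_{m+I,\,n+1} \;=\; \sum_{k=1}^{N_t} M_{m+I,k}\,M^t_{k,n+1} \;=\; \sum_{k=1}^{N_t} M_{m,k-1}\,M^t_{k-1,n} \;=\; (\boldsymbol{\mathrm M}\boldsymbol{\mathrm M}^t)_{m,n},
\]
where the middle step applies both shift identities and the last step is the relabeling $k' = k-1$ in a cyclic sum. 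Because $\boldsymbol{\mathrm H}$ is a linear combination of $\boldsymbol{\mathrm M}$ and $\boldsymbol{\mathrm M}\boldsymbol{\mathrm M}^t$, the identity $H_{m+I,\,n+1} = H_{m,n}$ follows immediately. Writing $\boldsymbol{\mathrm h}_i = [H_{1,i},H_{2,i},\ldots,H_{I,i}]^T$ and assembling $\boldsymbol{\mathrm H}$ block row by block row then yields exactly the form stated in the theorem, since each block row is the cyclic right-shift of the previous one.

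The hard part will not be the linear algebra but the careful verification that the rotation really does leave the two integrands invariant. One must check that after the joint shift $(m,n) \to (m+I, n+1)$ the parameters $d_{m,n}$ and $z_m$ are genuinely unchanged (using $z_m = D$ for all $m$ in the aligned case) and that the $t$- and $\phi$-integrations can be re-parametrized cyclically so that the two integral expressions coincide term by term. Once this rotational invariance of the integrands is on the table, everything else reduces to routine index bookkeeping, and the theorem follows in a few lines.
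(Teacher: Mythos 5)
Your proposal is correct and follows essentially the same route as the paper's proof: both arguments hinge on the shift identities $M_{m+I,\,n+1}=M_{m,n}$ and $M^t_{n_1+1,\,n_2+1}=M^t_{n_1,n_2}$, conclude that $\boldsymbol{\mathrm M}$ is block circulant and $\boldsymbol{\mathrm M}^t$ is circulant, and then pass to $\boldsymbol{\mathrm H}$ through Theorem~\ref{the:channel_matrix}. The only cosmetic difference is that you justify the shift identities by the $2\pi/N_t$ rotational symmetry of the aligned geometry while the paper reads them off the explicit formulas for $d_{m,n}$ and $d^t_{n_1,n_2}$ (and you spell out the cyclic-relabeling step showing $\boldsymbol{\mathrm M}\boldsymbol{\mathrm M}^t$ stays block circulant, which the paper leaves implicit).
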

\begin{proof}
See Appendix~\ref{pro:circulant channel}.
\end{proof}
Based on Theorem~\ref{the:circulant_matrix}, Eq.~\eqref{eq:H} can be rewritten as follows:
\begin{align}
\hat{\boldsymbol{\mathrm{H}}}=&\frac{1}{I}\hspace{-0.1cm}\sum_{i=1}^I\hspace{-0.1cm}
\begin{bmatrix}
{  H}_{i,1} & {  H}_{i,2} & \cdots & {  H}_{i,N_t}\\
{  H}_{i,N_t} & {  H}_{i,1} & \cdots & {  H}_{i,N_t-1}\\
\vdots & \vdots & \ddots & \vdots \\
{  H}_{i,2} & {  H}_{i,3} & \cdots & {  H}_{i,1}
\end{bmatrix}.
\end{align}
Thus, $\hat{\boldsymbol{\mathrm H}}$ is a circulant matrix.

In view of that the cyclic shift relationships in a circulant matrix and the circular convolution, given as follows\cite{oam_zc_scra}:
\begin{align}
\boldsymbol{\mathrm W}^{H}(\boldsymbol{\mathrm C} \boldsymbol{\mathrm v}) = \sqrt{N_t}(\boldsymbol{\mathrm W}^{H}\boldsymbol{\mathrm c}) \cdot (\boldsymbol{\mathrm W}^{H} \boldsymbol{\mathrm v}),
\label{eq:idenequ}
\end{align}
Eq.~\eqref{eq:y} can be simplified as follows:
\begin{align}
\boldsymbol{\mathrm y}&=\boldsymbol{\mathrm W}^H\hat{\boldsymbol{\mathrm H}}\left(\boldsymbol{\mathrm W}\boldsymbol{x}\right)+\boldsymbol{\mathrm W}^H\hat{\boldsymbol{\mathrm n}}\nonumber
\\&=\sqrt{N_t}\left[\boldsymbol{\mathrm W}^H\hat{\boldsymbol{\mathrm H}}(:,1)\right] \cdot \left[\boldsymbol{\mathrm W}^H\boldsymbol{\mathrm W}\boldsymbol{x}\right]+\boldsymbol{\mathrm W}^H\hat{\boldsymbol{\mathrm n}}\nonumber
\\&=\sqrt{N_t}\left[\boldsymbol{\mathrm W}^H\hat{\boldsymbol{\mathrm H}}(:,1)\right] \cdot \boldsymbol{x} + \boldsymbol{\mathrm W}^H\hat{\boldsymbol{\mathrm n}},
\label{eq:y2}
\end{align}
where $\boldsymbol{\mathrm C}$ represents a circulant matrix, $\boldsymbol{\mathrm c}$ is the first column of $\boldsymbol{C}$, $\boldsymbol{\mathrm v}$ denotes a general column vector, ``$\cdot$" represents dot product of matrices, and $\hat{\boldsymbol{\mathrm H}}(:,1)$ is the first column of $\hat{\boldsymbol{\mathrm H}}$.
In the third step, we let $\boldsymbol{\mathrm h}^{\rm Align}_{\rm {OAM}}\hspace{-0.1cm}=\hspace{-0.1cm}\sqrt{N_t}\left[\boldsymbol{\mathrm W}^H\hat{\boldsymbol{\mathrm H}}(:,1)\right]$. Based on Eq.~\eqref{eq:y2}, the relationships between $\boldsymbol{\mathrm H}^{\rm OAM}$ and $\boldsymbol{\mathrm h}^{\rm Align}_{\rm {OAM}}$ can be given as follows:
\begin{align}
\boldsymbol{\mathrm H}^{\rm OAM} = {\rm Diag}\left(\boldsymbol{\mathrm h}^{\rm Align}_{\rm {OAM}}\right).
\end{align}
Therefore, for aligned transceivers with $N_r\hspace{-0.1cm}=\hspace{-0.1cm} IN_t$, $\boldsymbol{\mathrm H}^{\rm OAM}$ is a diagonal matrix and is equal to ${\rm Diag}\left(\boldsymbol{\mathrm h}_{\rm {OAM}}\right)$. Hence, for aligned transceivers, Eq.~\eqref{eq:y2} can be rewritten as follows:
\begin{align}
\boldsymbol{\mathrm y}&={\rm Diag}\left(\boldsymbol{\mathrm h}_{\rm {OAM}}\right)\boldsymbol{x} + \boldsymbol{\mathrm W}^H\hat{\boldsymbol{\mathrm n}}.
\label{eq:y3}
\end{align}
Finally, based on Eqs.~\eqref{eq:judgment} and \eqref{eq:y3}, we can obtain
\begin{align}
\hat{\boldsymbol{\mathrm x}}=\boldsymbol{\mathrm x}+\left[{\rm{Diag}}(\boldsymbol{\mathrm h}_{\rm {OAM}})\right]^{\dag}\boldsymbol{\mathrm W}^H\hat{\boldsymbol{\mathrm n}},
\label{eq:xrecover2}
\end{align}
which validates the feasibility of our proposed OAM-NFC detection scheme without channel estimation. Eq.~\eqref{eq:xrecover2} also indicates that for aligned transceivers, crosstalks caused by the inductive coupling among different transmit coils are not obstacles to the recovery of OAM carrying signals with our proposed OAM-NFC generation and detection schemes.

\subsection{OAM-NFC Detection Scheme With Channel Estimation}
For misaligned transceivers or $N_r\hspace{-0.1cm}\ne\hspace{-0.1cm} IN_t$, we give a detection scheme with channel estimation. We send a pilot, denoted by an $N_t\times T$ matrix $\boldsymbol{\tilde{\mathrm S}}$, which is created by applying cyclic shifts to Zadoff-Chu (ZC) root sequences with length $T$ and its $n_t$th row and $t$th column element, denoted by ${\rm \tilde{S}}_{n_t, t}$, is given as follows:
\begin{align}
{\rm \tilde{S}}_{n_t, t}=
\begin{cases}
e^{-j\frac{\pi}{T}p\left[(t-n_t)^2\right]_T},&\hspace{-0.2cm}T\ \rm{is even};\\
e^{-j\frac{\pi}{T}p\left[(t-n_t)(t-n_t+1)\right]_T},&\hspace{-0.2cm}T\ \rm{is\ odd},
\end{cases}
\end{align}
where ``$[\cdot]_T$'' denotes the modulo operation of $T$ and $p$ is a positive integer relatively prime with $T$. $T$ should be larger than both $N_t$ and $N_r$. The received pilot signal, denoted by $\boldsymbol{\tilde{\mathrm R}}$, can be given as follows:
\begin{align}
\boldsymbol{\tilde{\mathrm R}}=\sqrt{P}\boldsymbol{\mathrm H}\boldsymbol{\tilde{\mathrm S}}+\boldsymbol{\mathrm N},
\end{align}
where $P$ denotes the transmit signal to noise ratio (SNR) for each transmit coil and $\boldsymbol{\mathrm N}$ denotes the AWGN matrix with independent and identically distributed elements following $\mathcal{CN}(0,1)$.
Adopting the least square (LS) method, the estimated channel matrix is given as follows\cite{MIMO_channel_estimation}:
\begin{align}
\boldsymbol{\mathrm H}^e=\frac{1}{\sqrt{P}T}\boldsymbol{\tilde{\mathrm R}}\boldsymbol{\tilde{\mathrm S}}^{\rm H}
=\boldsymbol{\mathrm H}+\frac{1}{\sqrt{P}T}\boldsymbol{\mathrm N}\boldsymbol{\tilde{\mathrm S}}^{\rm H}.
\end{align}
As the value of $P$ increases to infinity, $\boldsymbol{\mathrm H}^e$ converges $\boldsymbol{\mathrm H}$.
Then, the LS recovered signals, denoted by $\hat{\boldsymbol{\mathrm x}}^{\rm LS}$, can be given as follows:
\begin{align}
\hat{\boldsymbol{\mathrm x}}^{\rm LS} = \boldsymbol{\mathrm W}^{H}[\boldsymbol{\mathrm H}^e]^{\dag}\boldsymbol{\mathrm v}^r.
\label{eq:xrecovered_LS}
\end{align}
Finally, the output signal of the detection with channel estimation, denoted by $\boldsymbol{\tilde{\mathrm x}}^{\rm LS}$, can be given as follows:
\begin{align}
\boldsymbol{\tilde{\mathrm x}}^{\rm LS}\hspace{-0.1cm}=\mathop{\arg\min}\limits_{\boldsymbol{\mathrm x}\in \mathbb{C}_{N_t}}\left\Arrowvert{\hat{\boldsymbol{\mathrm x}}^{\rm LS}-\boldsymbol{\mathrm x}}\right\Arrowvert^2
\hspace{-0.1cm}=\mathop{\arg\min}\limits_{\boldsymbol{\mathrm x}\in \mathbb{C}_{N_t}}\left\Arrowvert{\boldsymbol{\mathrm W}^{H}[\boldsymbol{\mathrm H}^e]^{\dag}\boldsymbol{\mathrm v}^r -\boldsymbol{\mathrm x}}\right\Arrowvert^2.
\label{eq:judgment_LS}
\end{align}

\subsection{Feasibility of OAM-NFC Detection Schemes With Channel Estimation}
First, Eq.~\eqref{eq:xrecovered_LS} can be simplified as follows:
\begin{align}
\hat{\boldsymbol{\mathrm x}}^{\rm LS}
 &= \boldsymbol{\mathrm W}^{H}[\boldsymbol{\mathrm H}^e]^{\dag}\boldsymbol{\mathrm v}^r
 = \boldsymbol{\mathrm W}^{H}[\boldsymbol{\mathrm H}^e]^{\dag}\boldsymbol{\mathrm H}\boldsymbol{\mathrm W}\boldsymbol{\mathrm x} + \boldsymbol{\mathrm W}^{H}[\boldsymbol{\mathrm H}^e]^{\dag}\boldsymbol{\mathrm n}.
\end{align}
Then, the mean square error (MSE) per symbol for $\hat{\boldsymbol{\mathrm x}}^{\rm LS}$, denoted by $\boldsymbol{\mathrm\eta}(\hat{\boldsymbol{\mathrm x}}^{\rm LS})$, can be given as follows:
\begin{align}
\boldsymbol{\mathrm\eta}(\hat{\boldsymbol{\mathrm x}}^{\rm LS})&\hspace{-0.1cm}=\hspace{-0.1cm}\frac{1}{N_t}\mathbb E \left\{\Arrowvert\hat{\boldsymbol{\mathrm x}}^{\rm LS}-\boldsymbol{\mathrm x}\Arrowvert^2\right\}\nonumber
\\&\hspace{-0.1cm}=\hspace{-0.1cm} \frac{1}{N_t}\mathbb E \left\{\hspace{-0.05cm} \Arrowvert\boldsymbol{\mathrm W}^{H}[\boldsymbol{\mathrm H}^e]^{\dag}\boldsymbol{\mathrm H}\boldsymbol{\mathrm W}\boldsymbol{\mathrm x}\hspace{-0.1cm}-\hspace{-0.1cm}\boldsymbol{\mathrm x}\Arrowvert^2 \hspace{-0.05cm}\right\} \hspace{-0.1cm}+\hspace{-0.1cm} \frac{\mathbb E \left\{\hspace{-0.05cm} \left|{\rm diag}\left([\boldsymbol{\mathrm H}^e]^{\dag}\right)\right|^2 \hspace{-0.05cm}\right\}}{N_t}N_0,
\label{eq:MSE_xMMES}
\end{align}
where $\mathbb E\{\cdot\}$ denotes the average operation. As SNR is increased to infinity, Eq.~\eqref{eq:MSE_xMMES} becomes
\begin{align}
\lim\limits_{P\to\infty}\boldsymbol{\mathrm\eta}(\hat{\boldsymbol{\mathrm x}}^{\rm LS})&=\frac{\left|{\rm diag}\left(\boldsymbol{\mathrm H}^{\dag}\right)\right|^2}{N_t}N_0,
\end{align}
which validates the feasibility of our proposed OAM-NFC detection scheme with channel estimation for unaligned transceivers given in Eq.~\eqref{eq:judgment_LS}.

\section{Channel Capacity Analyses}\label{sec:Capacity}
In this section, we first derive the channel capacities of our proposed OAM-NFC with and without channel estimation. Then, we give the numerical analyses of OAM-NFC capacities with and without channel estimation under imperfect alignment condition. After that, we analyze how $N_t$, $N_r$, $D$, $R_t$, $R_r$, $r_t$, $r_r$,  $\theta_x$, $\theta_y$, $d_x$ and $d_y$ impact the OAM-NFC capacity. We also give the channel capacities of SISO-NFC and MIMO-NFC for comparison.

\subsection{OAM-NFC Channel Capacity and BER Without Channel Estimation}
\begin{figure*}[tp]
\setcounter{equation}{29}
\begin{subequations}
\begin{numcases}{}
\widehat{C}^{\rm {upper}}_{\rm {OAM}}=N_t{\rm{log}}\hspace{-0.1cm}\left(1\hspace{-0.1cm}+\hspace{-0.1cm}\frac{P_tI^2}{N_0}\frac{4\mu^2_0\omega^2K_t^2K_r^2r_tr_r}{\pi^2|Z_t|^2}\left|\int_0^\pi \frac{\left(1-|R_r\hspace{-0.1cm}-\hspace{-0.1cm}R_t|/r_t\right)\cos\phi\Psi(\widehat k^{upper})}{\widehat k^{upper}\sqrt{(\widehat V^{upper})^3}}d\phi\right|^2\right);\nonumber\label{eq:COAM2_upper}\\ \\
\widehat{C}^{\rm {lower}}_{\rm {OAM}}=N_t{\rm{log}}\hspace{-0.1cm}\left(1\hspace{-0.1cm}+\hspace{-0.1cm}\frac{P_tI^2}{N_0}\frac{4\mu^2_0\omega^2K_t^2K_r^2r_tr_r}{\pi^2|Z_t|^2}\left|\int_0^\pi \frac{\left[1-\left(R_r+R_t\right)/r_t\right]\cos\phi\Psi(\widehat k^{lower})}{\widehat k^{lower}\sqrt{(\widehat V^{lower})^3}}d\phi\right|^2\right),\nonumber\label{eq:COAM2_lower}\\
\end{numcases}\label{eq:COAM2_upper_lower}
\end{subequations}
\hrulefill
\end{figure*}
We first analyze the OAM-NFC channel capacity and the corresponding BER without channel estimation. In this section, we let $\boldsymbol{\mathrm x}$ be a complex Gaussian distributed vector with zero mean and the total transmit power per symbol period, denoted by $P_t$, is given by $P_t \hspace{-0.1cm}=\hspace{-0.1cm} \mathbb E \left\{\boldsymbol{\mathrm x}^H\boldsymbol{\mathrm x}\right\}$. Since $\hat{\boldsymbol{\mathrm n}}$ is a vector with independent and identically distributed elements following $\mathcal{CN}(0,N_0/I)$, based on Eq.~\eqref{eq:y}, when the transmit power is equally allocated to each OAM mode, the channel capacity of OAM-NFC, denoted by ${C}_{\rm {OAM}}$, can be derived as follows:
\begin{align}
\setcounter{equation}{25}
{C}_{\rm {OAM}}
\hspace{-0.1cm}=\hspace{-0.1cm}\sum_{l=0}^{N_t-1}{\rm{log}}\hspace{-0.1cm}\left(1 \hspace{-0.1cm}+\hspace{-0.1cm}\frac{\left|\boldsymbol{\mathrm h}_{\rm {OAM}}(l\hspace{-0.1cm}+\hspace{-0.1cm}1)\right|^2}{\frac{N_0N_t}{P_tI}\hspace{-0.1cm}+\hspace{-0.1cm} \sum_{q=0,q\ne l}^{Nt-1}\left|\boldsymbol{\mathrm H}_{\rm {OAM}}(q\hspace{-0.1cm}+\hspace{-0.1cm}1,l\hspace{-0.1cm}+\hspace{-0.1cm}1)\right|^2}\right),
\label{eq:COAM}
\end{align}
where $\boldsymbol{\mathrm h}_{\rm {OAM}}(l+1)$ represents the $(l+1)$th element of $\boldsymbol{\mathrm h}_{\rm {OAM}}$ and $\boldsymbol{\mathrm H}_{\rm {OAM}}(q+1,l+1)$ denotes the $(q+1)$th row and the $(l+1)$th column element in $\boldsymbol{\mathrm H}_{\rm {OAM}}$. Since Eq.~\eqref{eq:COAM} is the sum for multiple log functions of signal to interference plus noise ratio (SINR), the OAM-NFC capacity can be further increased by using existing analog or digital interference cancellation techniques\cite{OAM_analog_interfree,OAM_SIC}. Basically, the purpose of this paper is to provide a kind of benchmark that verifies the feasibility and capacity enhancement of OAM-NFC. Studies for algorithms that can further increase the OAM-NFC capacity are open problems. Using binary phase shift keying (BPSK) for the input and output signals, the bit error rate (BER) of OAM-NFC, denoted by ${P_e}_{\rm {OAM}}$, can be given as follows:
\begin{align}
{P_e}_{\rm {OAM}}\hspace{-0.1cm}=\hspace{-0.1cm}\frac{1}{2N_t}\sum_{l=0}^{N_t-1}{\rm{erfc}}\sqrt{\frac{\left|\boldsymbol{\mathrm h}_{\rm {OAM}}(l\hspace{-0.1cm}+\hspace{-0.1cm}1)\right|^2}{\frac{N_0N_t}{P_tI}\hspace{-0.1cm}+\hspace{-0.1cm}\boldsymbol \sum_{q=0,q\ne l}^{Nt-1}\left|\boldsymbol{\mathrm H}_{\rm {OAM}}(q\hspace{-0.1cm}+\hspace{-0.1cm}1,l\hspace{-0.1cm}+\hspace{-0.1cm}1)\right|^2}},
\label{eq:BEROAM}
\end{align}
where erfc$(\alpha)=\frac{2}{\sqrt{\pi}}$$\int_{\alpha}^{\infty} e^{-t^2}dt$.

In Eq.~\eqref{eq:channel_matrix}, because the values of elements in $\frac{\omega^2}{Z_t^{2}}\boldsymbol{\mathrm M}\boldsymbol{\mathrm M}^t$ are much smaller than those in $\boldsymbol{\mathrm H}$ (for example, for $K_t\hspace{-0.1cm}=\hspace{-0.1cm}K_r\hspace{-0.1cm}=\hspace{-0.1cm}5$, $N_t\hspace{-0.1cm}=\hspace{-0.1cm}N_r\hspace{-0.1cm}=\hspace{-0.1cm}8$, $R_t\hspace{-0.1cm}=\hspace{-0.1cm}R_r\hspace{-0.1cm}=\hspace{-0.1cm}25$ mm, $r_t\hspace{-0.1cm}=\hspace{-0.1cm}r_r\hspace{-0.1cm}=\hspace{-0.1cm}5$ mm, $D\hspace{-0.1cm}=\hspace{-0.1cm}25$ mm, $\theta_x\hspace{-0.1cm}=\hspace{-0.1cm}\theta_y\hspace{-0.1cm}=\hspace{-0.1cm}10$ deg, and $d_x\hspace{-0.1cm}=\hspace{-0.1cm}d_y\hspace{-0.1cm}=\hspace{-0.1cm}10$ mm, the values of elements in $\frac{\omega^2}{Z_t^{2}}\boldsymbol{\mathrm M}\boldsymbol{\mathrm M}^t$ are $8$ orders of magnitude smaller than those in $\boldsymbol{\mathrm H}$), we ignore the mutual inductance between each two transmit coils. Noting that for aligned transmit and receive coil rings, $\boldsymbol{\mathrm M}$ is a block circulant matrix with $\theta\hspace{-0.1cm}=\hspace{-0.1cm}0$ and $d_x\hspace{-0.1cm}=\hspace{-0.1cm}d_y\hspace{-0.1cm}=\hspace{-0.1cm}0$. Thus, the simplified OAM-NFC capacity, denoted by $\widehat{C}_{\rm {OAM}}$, can be derived as follows:
\begin{align}
\widehat{C}_{\rm {OAM}}&\hspace{-0.1cm}=\hspace{-0.1cm}\sum_{l=0}^{N_t-1}{\rm{log}}\hspace{-0.1cm}\left(1\hspace{-0.1cm}+\hspace{-0.1cm}\frac{P_tI}{N_0N_t}\left|\boldsymbol{\mathrm h}_{\rm {OAM}}(l+1)\right|^2\right)\nonumber\\
&\hspace{-0.1cm}=\hspace{-0.1cm}\sum_{l=0}^{N_t-1}{\rm{log}}\hspace{-0.1cm}\left(1\hspace{-0.1cm}+\hspace{-0.1cm}\frac{P_tI}{N_0N_t}\frac{\omega^2}{|Z_t|^2N_r}\left|\sum_{m=1}^{N_r}e^{-jl\frac{2\pi m}{N_r}}M_{m,1}\right|^2\right)\nonumber\\
&\hspace{-0.1cm}=\hspace{-0.1cm}\sum_{l=0}^{N_t-1}{\rm{log}}\Bigg(1\hspace{-0.1cm}+\hspace{-0.1cm}\frac{P_t}{N_0N_t^2}\frac{4\mu^2_0\omega^2K_t^2K_r^2r_tr_r}{\pi^2|Z_t|^2}\nonumber\\
&\hspace{1cm}\left|\sum_{m=1}^{IN_t}e^{-jl\frac{2\pi m}{N_r}}\hspace{-0.1cm}\int_0^\pi \frac{\left(1- d_{m,1}/r_t\right)\cos\phi\Psi(\widehat k_{m,1})}{\widehat k_{m,1}\sqrt{\widehat V_{m,1}^3}}d\phi\right|^2\Bigg),
\label{eq:COAM2}
\end{align}
where
\begin{align}
\left\{\!\!
\begin{array}{ll}
d_{m,1}=\sqrt{R_r^2+R_t^2-2R_rR_t\cos\left[2\pi\left(\frac{m}{N_r}-\frac{n}{N_t}\right)\right]};\\
\\
\widehat{V}_{m,1}=\sqrt{1+\left({ d_{m,1}}/{r_r}\right)^2-2\left({ d_{m,1}}/{r_r}\right)\cos\phi};\\
\\
\widehat{k}_{m,1}=2\sqrt{\frac{r_r\widehat{V}_{m,1}}{\left(r_t+r_r\widehat{V}_{m,1}\right)^2+D^2}};\\
\\
\Psi(\widehat k_{m,1})=\left(1-\frac{\widehat k_{m,1}^2}{2}\right)K(\widehat k_{m,1})-E(\widehat k_{m,1}).
\end{array}
\right.
\label{eq:VKPsi2}
\end{align}
In Eq.~\eqref{eq:VKPsi2}, $K(\xi)$ and $E(\xi)$ are the first and the second kinds of elliptic integrals. The upper and lower limits of Eq.~\eqref{eq:COAM2} are given by Theorem~\ref{the:upper_lower} as follows.
\begin{theorem}
The upper and lower limits of $\widehat{C}_{\rm {OAM}}$, denoted by $\widehat{C}^{\rm {upper}}_{\rm {OAM}}$ and $\widehat{C}^{\rm {lower}}_{\rm {OAM}}$, respectively, are given as Eq.~\eqref{eq:COAM2_upper_lower},
where $\widehat V^{upper}\hspace{-0.1cm}=\hspace{-0.1cm}\widehat V_{m,1}\Big|_{d_{m,1}=|R_r-R_t|}$, $\widehat k^{upper}\hspace{-0.1cm}=\hspace{-0.1cm}\widehat k_{m,1}\Big|_{\widehat V_{m,1}=\widehat V^{upper}}$, $\widehat V^{lower}\hspace{-0.1cm}=\hspace{-0.1cm}\widehat V_{m,1}\Big|_{d_{m,1}=R_r+R_t}$, and $\widehat k^{lower}\hspace{-0.1cm}=\hspace{-0.1cm}\widehat k_{m,1}\Big|_{\widehat V_{m,1}=\widehat V^{lower}}$.
\label{the:upper_lower}
\end{theorem}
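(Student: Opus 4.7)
The plan is to bound the $m$-sum in Eq.~(28) by replacing the $m$-dependent distance $d_{m,1}$ with its geometric extrema. Since $d_{m,1}=\sqrt{R_r^2+R_t^2-2R_rR_t\cos[2\pi(m/N_r-n/N_t)]}$, it lies in the interval $[|R_r-R_t|,\,R_r+R_t]$ as $m$ varies, with the minimum attained when a receive coil sits directly across from a transmit coil's axis and the maximum at the diametric position. These two extremes will feed the $upper$ and $lower$ expressions respectively.

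First I would establish that the inner integrand
$$g(d)\triangleq\int_0^\pi \frac{(1-d/r_t)\cos\phi\,\Psi(\widehat k(d))}{\widehat k(d)\sqrt{\widehat V(d)^3}}\,d\phi$$
is monotonically decreasing in magnitude as $d$ grows over $[|R_r-R_t|,R_r+R_t]$. Unpacking Eq.~(29), $\widehat V(d)$ increases with $d$, $\widehat k(d)$ decreases, and the prefactor $(1-d/r_t)$ decreases linearly; combined with the behavior of $\Psi(\widehat k)=(1-\widehat k^2/2)K(\widehat k)-E(\widehat k)$ in the small-$\widehat k$ regime appropriate to short-range NFC, this gives $|g(|R_r-R_t|)|\ge|g(d_{m,1})|\ge|g(R_r+R_t)|$ for every admissible $m$. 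This monotonicity is the analytical engine of the theorem.

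Given this, the upper bound follows from the triangle inequality applied to the $m$-sum:
$$\left|\sum_{m=1}^{N_r}e^{-jl\frac{2\pi m}{N_r}}g(d_{m,1})\right|^2\le\left(\sum_{m=1}^{N_r}|g(d_{m,1})|\right)^2\le N_r^2\,|g(|R_r-R_t|)|^2.$$
Substituting into Eq.~(28), every term in the outer $l$-sum acquires the same upper bound, so the sum collapses into $N_t\log(\cdots)$. The factor $N_r^2/N_t^2=I^2$ absorbs the pre-existing $1/N_t^2$, yielding the coefficient $P_tI^2/N_0$; setting $\widehat V^{upper}=\widehat V_{m,1}|_{d_{m,1}=|R_r-R_t|}$ and $\widehat k^{upper}=\widehat k_{m,1}|_{\widehat V_{m,1}=\widehat V^{upper}}$ reproduces $\widehat C^{\rm upper}_{\rm OAM}$ verbatim.

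For the lower bound I would perform the symmetric substitution $d_{m,1}\to R_r+R_t$, which replaces every $|g(d_{m,1})|$ by its minimum $|g(R_r+R_t)|$ and yields the stated expression after the same $N_r^2/N_t^2=I^2$ bookkeeping. The main obstacle here is that the triangle inequality does not directly deliver a lower bound on $|\sum_m e^{-jl 2\pi m/N_r}g(d_{m,1})|$, because phase cancellations for $l\not\equiv 0\pmod{N_r}$ can shrink this sum. I would address this by invoking the aligned-ring block-circulant structure established in Theorem~2, which forces the $g(d_{m,1})$ to have a highly symmetric dependence on $m$ and a common sign inherited from the magnetic-induction kernel; under this symmetry, replacing $g(d_{m,1})$ uniformly by $g(R_r+R_t)$ produces a worst-case envelope in which each OAM bin receives power at least $N_r^2|g(R_r+R_t)|^2$, so that every $l$-term in Eq.~(28) is bounded below by the same quantity and the outer sum again telescopes into $N_t\log(\cdots)$. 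This symmetry argument, rather than the monotonicity of $g(d)$, is the delicate step of the proof.
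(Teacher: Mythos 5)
Your upper-bound argument is essentially the paper's. The paper bounds the squared magnitude of the $m$-sum by $N_r\sum_m|c_m|^2$ using the arithmetic-mean/quadratic-mean inequality and then replaces $d_{m,1}$ by its minimum $|R_r-R_t|$, invoking (with only a one-line assertion) that the mutual-inductance integrand decreases in magnitude as $d_{m,1}$ grows; your triangle-inequality chain $(\sum_m|c_m|)^2\le N_r^2\max_m|c_m|^2$ arrives at the same coefficient $P_tI^2/N_0$ and the same final expression, so that half is sound and matches the paper's route.

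The genuine gap is in your lower bound. You are right that phase cancellation for $l\ne 0$ defeats any direct lower bound on $|\sum_m e^{-jl2\pi m/N_r}g(d_{m,1})|$, but the symmetry argument you offer in its place does not work: since $\sum_{m=1}^{N_r}e^{-jl2\pi m/N_r}=0$ for $l\not\equiv 0\pmod{N_r}$, the DFT of the sequence $(g(d_{m,1}))_m$ at a nonzero bin depends only on the variation of $g(d_{m,1})$ about its mean, not on its magnitude, so no bound of the form ``each OAM bin receives power at least $N_r^2|g(R_r+R_t)|^2$'' can follow from the block-circulant symmetry. Concretely, with $N_r=2$ and $g(d_{1,1})=g(d_{2,1})=a>0$, the $l=1$ bin is $ae^{-j\pi}+ae^{-j2\pi}=0$ while your claimed envelope is $4a^2>0$; near-constant $g$ gives near-zero bins for every $l\ne0$. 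The paper's own proof takes a different route: it applies the AM--GM inequality $\frac{1}{N}\sum_n a_n\ge\sqrt[N]{\prod_n a_n}$ formally to the summands to obtain $|\sum_m c_m|^2\ge N_r^2\prod_m|c_m|^{2/N_r}$, and then replaces every $d_{m,1}$ by its maximum $R_r+R_t$ so the geometric mean is bounded below by $|g(R_r+R_t)|$. That is the chain you would need to reproduce to match Theorem~\ref{the:upper_lower}; note, though, that this AM--GM step is itself only literally valid when all summands share a common phase (e.g.\ $l=0$), so your instinct that this is the delicate point is correct even though your proposed repair does not close it.
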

\begin{proof}
See Appendix~\ref{pro:COAM2_upper_lower}.
\end{proof}

\subsection{OAM-NFC Channel Capacity and BER With Channel Estimation}
\begin{figure*}[tp]
\setcounter{equation}{30}
\begin{subequations}
\begin{numcases}{}
C^{\rm LS}_{\rm {OAM}}=\sum_{l=0}^{N_t-1}{\rm{log}}\left(1+\frac{\frac{P_t}{N_t}\left|\boldsymbol{\mathrm H}^{\rm LS}_{\rm OAM}(l+1,l+1)\right|^2}{\frac{Pt}{Nt}\sum_{q=0,j\ne l}^{N_t-1}\left|\boldsymbol{\mathrm H}^{\rm LS}_{\rm OAM}(l+1,q+1)\right|^2+N_0 \boldsymbol{\mathrm D}^{\rm LS}_{\rm OAM}(l+1,l+1)}\right);\nonumber\\ \label{eq:C^LS_OAM}\\
P^{\rm LS}_{\rm {eOAM}}=\frac{1}{2N_t}\sum_{l=0}^{N_t-1}{\rm{erfc}}\sqrt{\frac{\frac{P_t}{N_t}\left|\boldsymbol{\mathrm H}^{\rm LS}_{\rm OAM}(l+1,l+1)\right|^2}{\frac{Pt}{Nt}\sum_{q=0,j\ne l}^{N_t-1}\left|\boldsymbol{\mathrm H}^{\rm LS}_{\rm OAM}(l+1,q+1)\right|^2+N_0 \boldsymbol{\mathrm D}^{\rm LS}_{\rm OAM}(l+1,l+1)}},\nonumber\\ \label{eq:Pe^LS_OAM}
\end{numcases}\label{eq:C^LS_Pe^LS_OAM}
\end{subequations}
\hrulefill
\end{figure*}
Then, we derive the OAM-NFC channel capacity and the corresponding BER with channel estimation. Based on Eq.~\eqref{eq:judgment_LS}, the OAM-NFC channel capacity and the corresponding BER with channel estimation, denoted by $C^{\rm LS}_{\rm {OAM}}$ and $P^{\rm LS}_{\rm {eOAM}}$, respectively, can be given as Eq.~\eqref{eq:C^LS_Pe^LS_OAM},
where $\boldsymbol{\mathrm H}^{\rm LS}_{\rm OAM}\hspace{-0.1cm}=\hspace{-0.1cm}\boldsymbol{\mathrm W}^H(\boldsymbol{\mathrm H}^e)^{\dag}\boldsymbol{\mathrm H}\boldsymbol{\mathrm W}$ and $\boldsymbol{\mathrm D}^{\rm LS}_{\rm OAM}\hspace{-0.1cm}=\hspace{-0.1cm}\boldsymbol{\mathrm W}^H(\boldsymbol{\mathrm H}^e)^\dag\left[\boldsymbol{\mathrm W}^H(\boldsymbol{\mathrm H}^e)^\dag\right]^H$ denote the OAM-NFC channel matrix and OAM-NFC equalization matrix with LS, respectively.

\begin{figure*}[htbp]
\centering
\subfigure[OAM-NFC capacity without channel estimation.]{
\begin{minipage}{0.45\linewidth}
\centering
\includegraphics[scale=0.6]{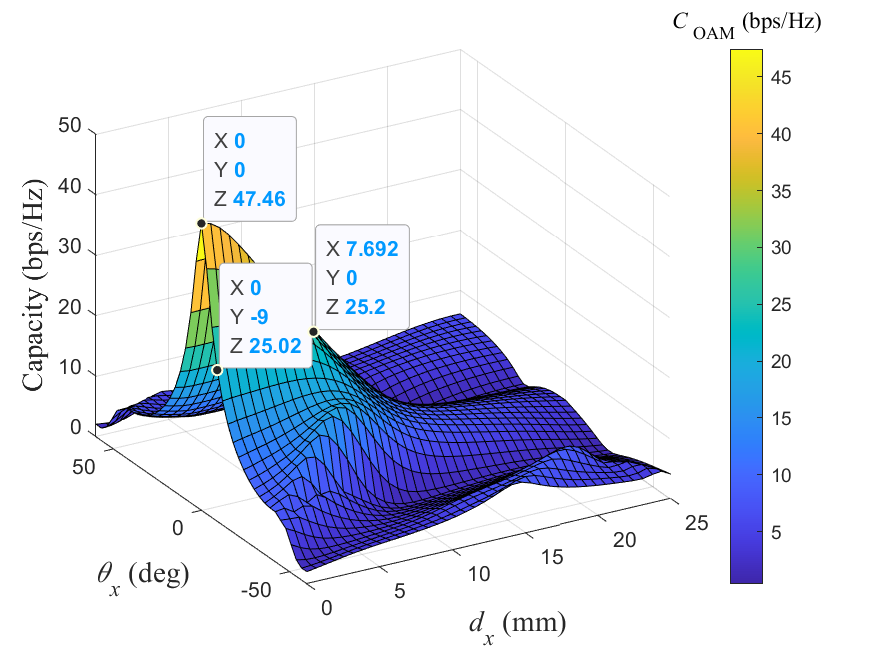}
\label{fig:Capacity_theo_W_20dB}
\end{minipage}
}
\subfigure[OAM-NFC capacity with channel estimation.]{
\begin{minipage}{0.45\linewidth}
\centering
\includegraphics[scale=0.6]{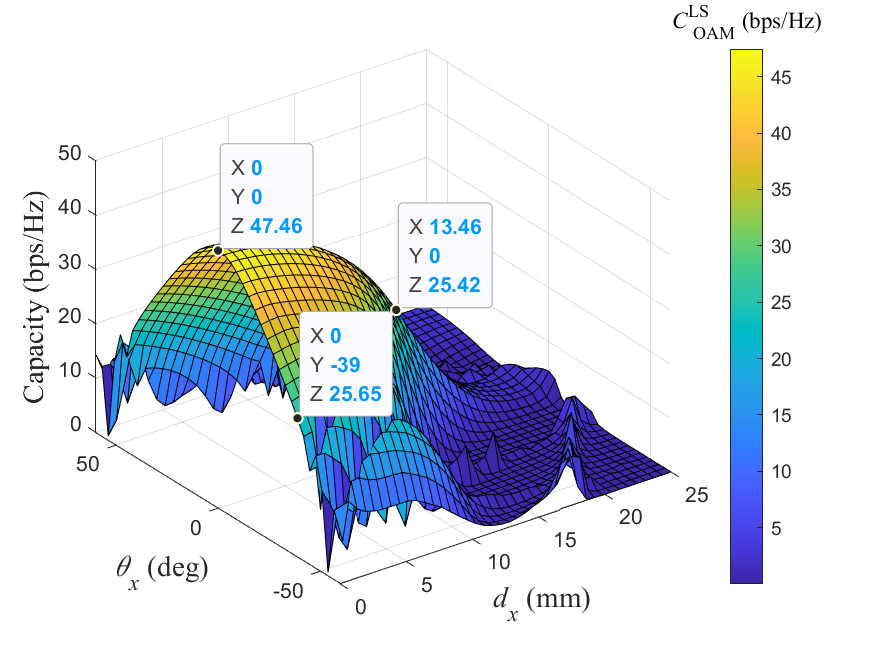}
\label{fig:Capacity_theo_LS_20dB}
\end{minipage}
}
\centering
\caption{OAM-NFC capacities with and without channel estimation under imperfect alignment condition.}
\label{fig:Capacity_theo_LSvsW}
\end{figure*}

\subsection{Analyses of OAM-NFC Capacity Under Imperfect Alignment Condition}\label{sec:capacity_analyses_LSvsW}
In this part, we give the numerical analyses of OAM-NFC capacities with and without channel estimation based on Eqs.~\eqref{eq:COAM} and \eqref{eq:C^LS_OAM} under imperfect alignment conditions. In Fig.~\ref{fig:Capacity_theo_LSvsW}, we set the frequency at $13.56$ MHz while let $N_t\hspace{-0.1cm}=\hspace{-0.1cm}N_r\hspace{-0.1cm}=\hspace{-0.1cm}8$, $R_t\hspace{-0.1cm}=\hspace{-0.1cm}R_r\hspace{-0.1cm}=\hspace{-0.1cm}25$ mm, $D\hspace{-0.1cm}=\hspace{-0.1cm}25$ mm, $d_y\hspace{-0.1cm}=\hspace{-0.1cm}\theta_y\hspace{-0.1cm}=\hspace{-0.1cm}0$, $r_t\hspace{-0.1cm}=\hspace{-0.1cm}r_r\hspace{-0.1cm}=\hspace{-0.1cm}5$ mm, $K_t\hspace{-0.1cm}=\hspace{-0.1cm}K_r\hspace{-0.1cm}=\hspace{-0.1cm}1$, $R_0\hspace{-0.1cm}=\hspace{-0.1cm}0.0175$ $\rm\Omega/\rm m/\rm mm^2$, and $S\hspace{-0.1cm}=\hspace{-0.1cm}0.05\ \rm mm^2$. $P_t$ is set as $8$ W and $N_0$ is set as $0.08$ W. We let $d_x$ vary from $0$ to $25$ mm as well as $\theta_x$ from $-60$ to $60$ degrees to analyze the feasibility of our proposed OAM-NFC under imperfect alignment conditions. Fig.~\ref{fig:Capacity_theo_W_20dB} shows the OAM-NFC capacity without channel estimation corresponding to different values of $d_x$ and $\theta_x$. As $|\theta_x|$ increases to $9$ degrees or $d_x$ increases to $7.5$ mm, the value of $C_{\rm {OAM}}$ decreases by $3$ dB compared with that of perfect alignment condition. Fig.~\ref{fig:Capacity_theo_LS_20dB} shows the OAM-NFC capacity with channel estimation corresponding to different values of $d_x$ and $\theta_x$. As $|\theta_x|$ increases to $40$ degrees or $d_x$ increases to $13.5$ mm, the value of $C^{\rm LS}_{\rm {OAM}}$ decreases by $3$ dB. Compare Figs.~\ref{fig:Capacity_theo_W_20dB} and \ref{fig:Capacity_theo_LS_20dB}, it can be found that although $C_{\rm {OAM}}$ is equal to $C^{\rm LS}_{\rm {OAM}}$ under perfect alignment conditinon, $C^{\rm LS}_{\rm {OAM}}$ decreases much slower than $C_{\rm {OAM}}$ as $d_x$ and $|\theta_x|$ increase, which indicates that channel estimation can improve the robustness of our proposed OAM-NFC against misalignment.

\subsection{Analyses of OAM-NFC Channel Capacity}\label{sec:capacity_analyses}
In this part, we analyze how various parameters affect our proposed OAM-NFC channel capacity. Eqs.~\eqref{eq:COAM} and \eqref{eq:COAM2} indicate that the obtained OAM-NFC capacity increases as $\omega$, $K_t$, and $K_r$ increase while decreases as $|Z_t|$ increases. Thus, to achieve high capacity, it is demanded to use coils with more turns and smaller impedance at a higher frequency. Then, based on Eqs.~\eqref{eq:COAM2}, \eqref{eq:COAM2_upper}, and \eqref{eq:COAM2_lower}, we analyze how $N_t$, $N_r$, $D$, $R_t$, $R_r$, $r_t$, and $r_r$ impact the OAM-NFC capacity. Also, based on Eq.~\eqref{eq:C^LS_OAM}, numerical results are given to validate our analyses.

\begin{figure*}[htbp]
\centering
\subfigure[Impact of transmit and receive coil numbers.]{
\begin{minipage}{0.45\linewidth}
\centering
\includegraphics[scale=0.55]{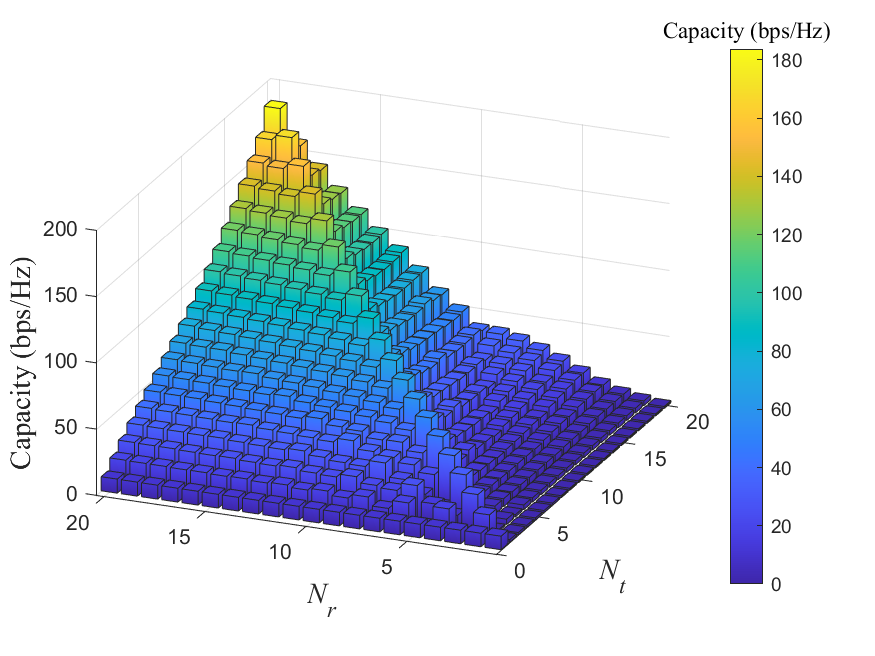}
\label{fig:Capacity_theo_N}
\end{minipage}
}
\subfigure[Impact of transceiver distance.]{
\begin{minipage}{0.45\linewidth}
\centering
\includegraphics[scale=0.52]{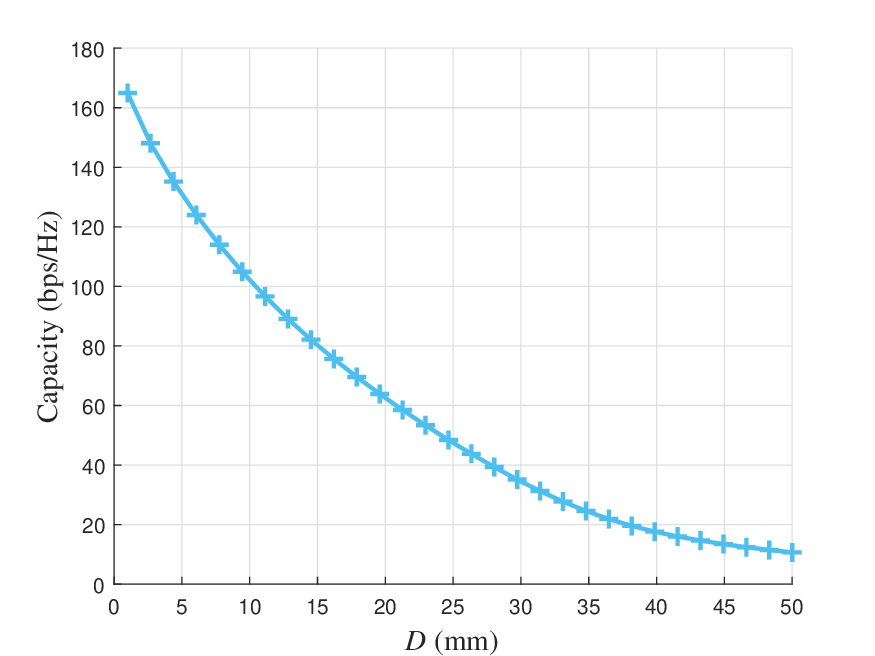}
\label{fig:Capacity_theo_D}
\end{minipage}
}\\
\subfigure[Impact of transmit and receive coil ring radii.]{
\begin{minipage}{0.45\linewidth}
\centering
\includegraphics[scale=0.55]{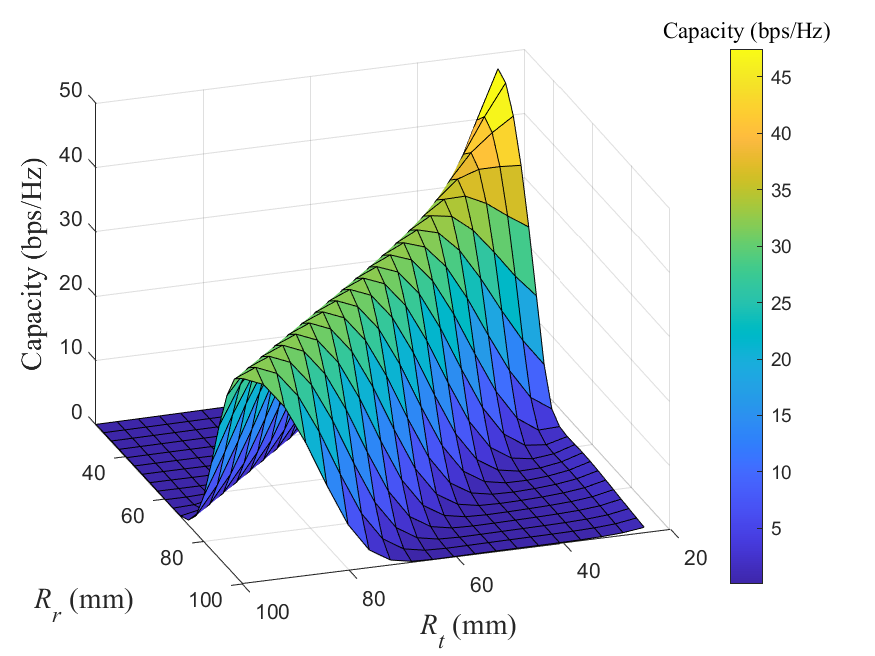}
\label{fig:Capacity_theo_R}
\end{minipage}
}
\subfigure[Impact of transmit and receive coil radii.]{
\begin{minipage}{0.45\linewidth}
\centering
\includegraphics[scale=0.55]{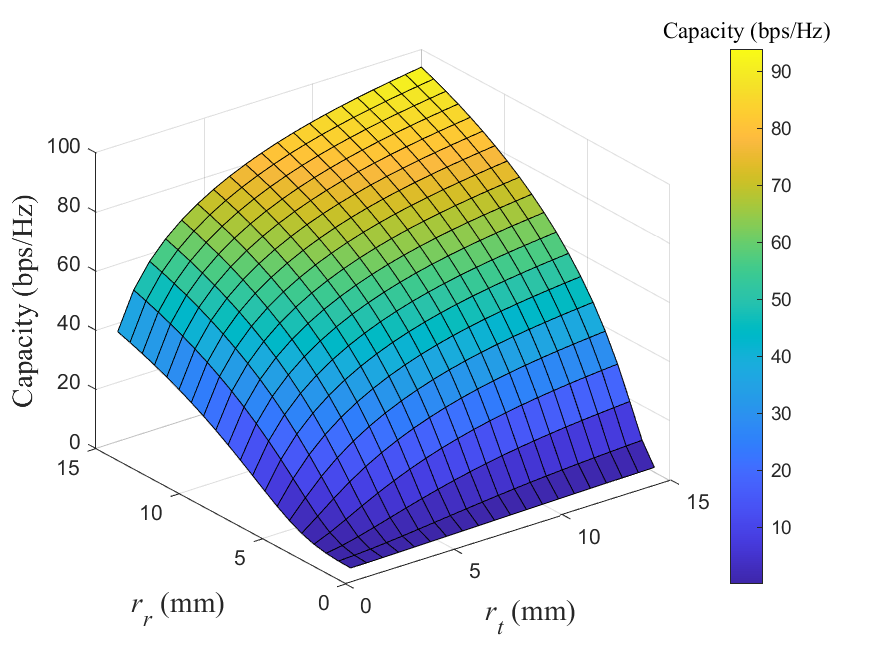}
\label{fig:Capacity_theo_rr}
\end{minipage}
}
\centering
\caption{Numerical analyses of the OAM-NFC capacity with channel estimation at the receiver.} \label{fig:capacity_analyses_withEs}
\end{figure*}

\textbf{Impact of Transmit and Receive Coil Numbers:} In Eqs.~\eqref{eq:COAM2_upper} and \eqref{eq:COAM2_lower}, $N_t$ outside the log provides a multiplexing gain while $I$ inside the log provides a diversity gain for the OAM-NFC capacity. Therefore, both $\widehat{C}^{\rm {upper}}_{\rm {OAM}}$ and $\widehat{C}^{\rm {lower}}_{\rm {OAM}}$ increase as $N_t$ and $N_r$ increase. Moreover, $\widehat{C}^{\rm {upper}}_{\rm {OAM}}$ and $\widehat{C}^{\rm {lower}}_{\rm {OAM}}$ increase faster as $N_{t}$ increases than $N_{r}$ increases. This is because $N_{t}$ corresponds to the number of available OAM modes while $N_{r}$ relates to the received SNR. Based on Eq.~\eqref{eq:C^LS_OAM}, Fig.~\ref{fig:Capacity_theo_N} gives the numerical results of different numbers of transmit and receive coils for the OAM-NFC capacity. In Fig.~\ref{fig:Capacity_theo_N}, we set $d_x\hspace{-0.1cm}=\hspace{-0.1cm}\theta_x\hspace{-0.1cm}=\hspace{-0.1cm}0$ with $N_t$ as well as $N_r$ varying from $1$ to $20$. We also set $R_t\hspace{-0.1cm}=\hspace{-0.1cm}R_r\hspace{-0.1cm}=\hspace{-0.1cm}50$ mm to avoid overlaps among coils. Other variables remain the same as those in Section~\ref{sec:capacity_analyses_LSvsW}. For $N_t\hspace{-0.1cm}\ne\hspace{-0.1cm}N_r$, since larger $N_t$ provides more OAM modes, the capacity increases as $N_t$ increases. For $N_t\hspace{-0.1cm}=\hspace{-0.1cm}N_r$, the OAM-NFC capacity increases as $N_t$ and $N_r$ increase depending on the increasing number of the orthogonal OAM modes.

\textbf{Impact of Transceiver Distance:} In Eq.~\eqref{eq:VKPsi2}, $\widehat k$ decreases as the transceiver distance increases, which leads to the decrease of $\Psi(\widehat k)$, thus decreasing the mutual inductance between each pair of transmit coil and receive coil. Therefore, increasing $D$ leads to the decrease of the OAM-NFC capacity in Eq.~\eqref{eq:COAM2}. Fig.~\ref{fig:Capacity_theo_D} plots the numerical results of different transceiver distances for the OAM-NFC capacity. In Fig.~\ref{fig:Capacity_theo_D}, we set $d_x\hspace{-0.1cm}=\hspace{-0.1cm}\theta_x\hspace{-0.1cm}=\hspace{-0.1cm}0$ with the value of $D$ varying from $0$ to $50$ mm. Other variables remain the same as Section~\ref{sec:capacity_analyses_LSvsW}. Fig.~\ref{fig:Capacity_theo_D} shows that the OAM-NFC capacity decreases as $D$ increases. However, the OAM-NFC capacity decreases more and more slowly as $D$ increases. This is because the crosstalk among different OAM modes also decreases as $D$ increases, which reduces the interferences among signals with different OAM modes.

\textbf{Impact of Transmit and Receive Coil Ring Radii:} In Eq.~\eqref{eq:VKPsi2}, for $R_t\hspace{-0.1cm}=\hspace{-0.1cm}R_r$, the transmit coil and receive coil of each pair are aligned with each other, which leads to ${d_{1,1}}\hspace{-0.1cm}=\hspace{-0.1cm}0$, thus maximizing the mutual inductance between each pair of transmit coil and receive coil. Therefore, the OAM-NFC capacity with $R_t\hspace{-0.1cm}=\hspace{-0.1cm}R_r$ is much higher than that with $R_t\hspace{-0.1cm}\ne\hspace{-0.1cm}R_r$. Also, ${d_{m,1}}$ increases while $\widehat V$ and $\widehat k$ decrease as $R_t$ and $R_r$ increase, which leads to the decrease of mutual inductance, thus decreasing the OAM-NFC capacity. Fig.~\ref{fig:Capacity_theo_R} plots the numerical results of different transmit and receive coil ring radii for the OAM-NFC capacity. In Fig.~\ref{fig:Capacity_theo_R}, we set $d_x\hspace{-0.1cm}=\hspace{-0.1cm}\theta_x\hspace{-0.1cm}=\hspace{-0.1cm}0$ with the values of $R_t$ as well as $R_r$ varying from $20$ to $100$ mm. Other variables remain the same as those in Section~\ref{sec:capacity_analyses_LSvsW}. Fig.~\ref{fig:Capacity_theo_R} verifies that the OAM-NFC capacity with $R_t\hspace{-0.1cm}=\hspace{-0.1cm}R_r$ is much higher than that with $R_t\hspace{-0.1cm}\ne\hspace{-0.1cm}R_r$. This is because the same transmit and receive coil ring radius
maximizes the mutual inductance. Also, we can find that the OAM-NFC capacity decreases as $R_t$ and $R_r$ increase for $R_t\hspace{-0.1cm}=\hspace{-0.1cm}R_r$.

\textbf{Impact of Transmit and Receive Coil Radii:} In Eq.~\eqref{eq:M}, for aligned transceivers with the same coil ring radii, we have $d_{m,n}\hspace{-0.1cm}=\hspace{-0.1cm}0$, which leads to $2d_{m,n}r_t\sin t\hspace{-0.1cm}=\hspace{-0.1cm}0$. Therefore, $M_{m,n}$ is positively correlated with $r_t$ and $r_r$. Consequently, the mutual inductance increases as $r_t$ and $r_r$ increase, thus leading to a higher OAM-NFC capacity in Eqs.~\eqref{eq:COAM} and \eqref{eq:COAM2}. Fig.~\ref{fig:Capacity_theo_rr} plots the numerical results of different transmit and receive coil radii for the OAM-NFC capacity. In Fig.~\ref{fig:Capacity_theo_rr}, we set $d_x\hspace{-0.1cm}=\hspace{-0.1cm}\theta_x\hspace{-0.1cm}=\hspace{-0.1cm}0$ with the values of $r_t$ as well as $r_r$ varying from $0$ to $15$ mm. We also set $R_t\hspace{-0.1cm}=\hspace{-0.1cm}R_r\hspace{-0.1cm}=\hspace{-0.1cm}15$ mm to avoid overlaps among coils. Other variables remain the same as those in Section~\ref{sec:capacity_analyses_LSvsW}. Fig.~\ref{fig:Capacity_theo_rr} shows that the OAM-NFC capacity increases as $r_t$ and $r_r$ increase. However, this increase slows down as $r_t$ and $r_r$ increase. This is because as $r_t$ and $r_r$ become closer and closer to $R_t$ and $R_r$, respectively, the crosstalks increase, thus increasing the interferences among signals with different OAM modes. Also, due to $R = 2\pi r_tK_t R_0 /S$ and $Z_t=R+{1}/{j\omega C}+j\omega L_t$, the increase of $r_t$ leads to the increase of $Z_t$, thus decreasing the capacity. Therefore, the OAM-NFC capacity increases faster as $r_r$ increases than $r_t$.

\subsection{SISO-NFC and MIMO-NFC Channel Capacity With Channel Estimation}
For comparison, we also give the capacities of conventional SISO-NFC and MIMO-NFC. First, we give the induced voltage of SISO-NFC, denoted by $V^i\hspace{-0.1cm}=\hspace{-0.1cm}\frac{-j\omega M'}{Z_t}V^s$,
where $V^s$ represents the source voltage with $\left|V^s\right|^2\hspace{-0.1cm}=\hspace{-0.1cm}P_t$ and $M'$ represents the mutual induction between the transmit coil and receive coil ($M'$ is equal to $M_{m,n}$ with $n\hspace{-0.1cm}=\hspace{-0.1cm}m$). Without loss of generality, let $M'\hspace{-0.1cm}=\hspace{-0.1cm} M_{1,1}$. Then, the capacity of SISO-NFC, denoted by $C_{\rm {SISO}}$, can be given as follows:
\begin{align}
{C}_{\rm {SISO}}&={\rm{log}}\left(1+\frac{\left|V^i\right|^2}{N_0}\right)\hspace{-0.1cm}
=\hspace{-0.1cm}{\rm{log}}\left(1+\frac{P_t}{N_0}\frac{\omega^2\left| M_{1,1}\right|^2}{\left|Z_t\right|^2}\right).
\label{eq:CSISO}
\end{align}

Second, we give the capacity of MIMO-NFC with channel estimation. With the same transmit and receive coil rings for transmission, the complex channel gain of MIMO-NFC is equal to $\boldsymbol{\mathrm H}$. However, since the size of each coil is very small compared with the wavelength, the crosstalks among transmit coils are strong, MIMO-NFC channel is highly correlated\cite{oam_MIMO}. Thus, the spatial correlation should be taken into consideration. We denote by $\boldsymbol{\rm G}_t$ and $\boldsymbol{\rm G}_r$ the correlation matrices at the transmit and the receive coils, respectively. Thus, the channel matrix of MIMO-NFC can be given by $\boldsymbol{\rm H}_{\rm {MIMO}}=\boldsymbol{\rm G}_{r} \boldsymbol{\rm H} \boldsymbol{\rm G}_{t}$. Thus, the channel matrix of MIMO-NFC can be given by $\boldsymbol{\rm H}_{\rm {MIMO}}=\boldsymbol{\rm G}_{r} \boldsymbol{\rm H} \boldsymbol{\rm G}_{t}$, and the channel capacity of MIMO-NFC, denoted by $C_{\rm {MIMO}}$, can be given as follows:
\begin{align}
C_{\rm {MIMO}}&\hspace{-0.1cm}=\hspace{-0.1cm}{\rm{log}}\left[{\rm det}\left(\boldsymbol{\rm I}+\frac{P_t}{N_0N_t}\boldsymbol{\rm H}_{\rm {MIMO}}\boldsymbol{\rm H}_{\rm {MIMO}}^H\right)\right]
\nonumber\\ &\hspace{-0.1cm}=\hspace{-0.1cm}{\rm{log}}\left[{\rm det}\left(\boldsymbol{\rm I}+\frac{P_t}{N_0N_t}\boldsymbol{\rm G}_{r} \boldsymbol{\rm H} \boldsymbol{\rm G}_{t} \boldsymbol{\rm G}^H_{t} \boldsymbol{\rm H}^H \boldsymbol{\rm G}^H_{r}\right)\right],
\label{eq:CMIMO}
\end{align}
where $\boldsymbol{\rm I}$ is the unit matrix. What should be noticed is that the capacity of MIMO-NFC achieved by using SVD, smart antennas, or water-filling power allocation needs the transmitter to know the CSI, which requires the feedback from the receiver to the transmitter. Although simple channel estimation methods can be combined with NFC by using digital signal processing at the receiver, whether CSI feedback can be deployed for multi-coil NFC systems has not been confirmed. This is because NFC is based on the resonant inductive coupling between the transmit and receive coils. The CSI feedback, which can be obtained in SISO-NFC by analyzing the change of current in the transmit coil, is not available for multi-coil NFC system due to the interference from the transmit coils as well as other receiver coils. Also, if the CSI is sent to the transmitter by using an additional feedback coil pair, not only will it causes new interference, but also will add more hardware cost.

\subsection{Impact of Transmit and Receive Coil Numbers on MIMO-NFC and OAM-NFC}
\begin{figure}[htbp]
\centering
\includegraphics[scale=0.6]{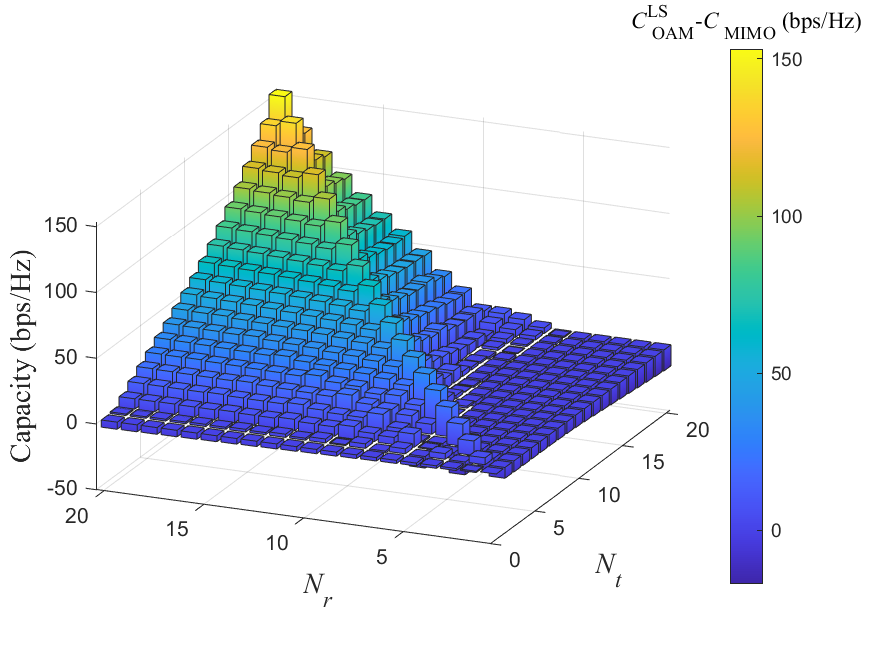}
\caption{Capacity gap between OAM-NFC and MIMO-NFC with different transmit and receive coil numbers.}
\label{fig:COAM-CMIMO}
\end{figure}
Using channel estimation for OAM-NFC at the receiver, the input signals can be detected for $N_r\hspace{-0.1cm}\ne\hspace{-0.1cm} IN_t$. This is because the multi-coil mutual inductance channel can be transformed to a square matrix, making the irregular coils do not decrease the OAM-NFC capacity. We also analyze whether MIMO-NFC outperforms OAM-NFC with $N_r\hspace{-0.1cm}\ne\hspace{-0.1cm} IN_t$ by giving the capacity gap between OAM-NFC and MIMO-NFC in Fig.~\ref{fig:COAM-CMIMO}. We set $d_x\hspace{-0.1cm}=\hspace{-0.1cm}\theta_x\hspace{-0.1cm}=\hspace{-0.1cm}0$ and set $N_t$ as well as $N_r$ vary from $1$ to $20$. We also set $R_t\hspace{-0.1cm}=\hspace{-0.1cm}R_r\hspace{-0.1cm}=\hspace{-0.1cm}50$ mm to avoid overlaps among coils. Other variables remain the same as those in Section IV-C. Fig.~\ref{fig:COAM-CMIMO} shows that, although MIMO-NFC has a slightly higher capacity than OAM-NFC with small $N_t$ or $N_r$, OAM-NFC has a much higher capacity than MIMO-NFC with large $N_t$ and $N_r$. This is because, for $R_t\hspace{-0.1cm}=\hspace{-0.1cm}R_r\hspace{-0.1cm}=\hspace{-0.1cm}50$, small $N_t$ or $N_r$ leads to a non-dense placed coil structure and decreases the correlation among coils, thus increasing the MIMO-NFC capacity. Therefore, the irregular condition does not provide superiority in MIMO-NFC systems for densely placed coils.

\section{Verification and Simulation Results}\label{sec:Simulation}
In this section, we first give our proposed OAM-NFC simulation model in ANSYS high frequency structure simulator (HFSS). Then, we explain the relationship between the S-parameter matrix and the channel matrix. After that, we validate the feasibility of OAM-NFC by analyzing the phase structures and BERs, respectively. Also, we compare the OAM-NFC capacity with those of SISO-NFC and MIMO-NFC to validate the capacity enhancement. How imperfect alignment, the transceiver coil number, transceiver distance, transceiver coil ring radii, and coil radii, impact the capacity, respectively, are also evaluated.

\subsection{HFSS Model}\label{sec:HFSS_model}
\begin{figure}[htbp]
\centering
\subfigure[The OAM-NFC coil structure.]{
\begin{minipage}{1\linewidth}
\centering
\includegraphics[scale=0.53]{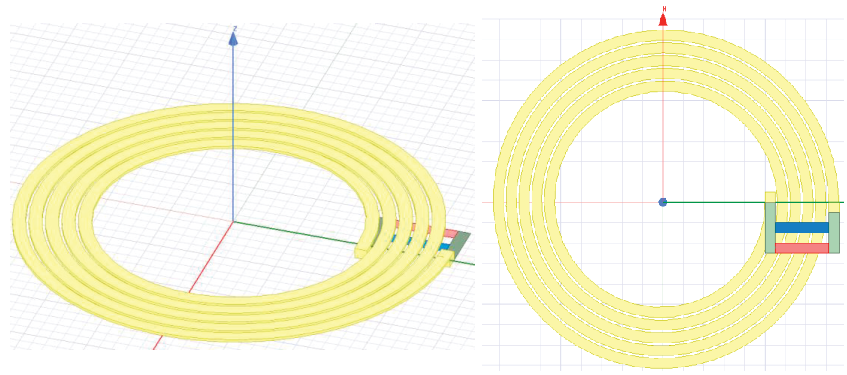}
\label{fig:HFSScoil}
\end{minipage}
}\\
\subfigure[Transceiver coil rings.]{
\begin{minipage}{1\linewidth}
\centering
\includegraphics[scale=0.56]{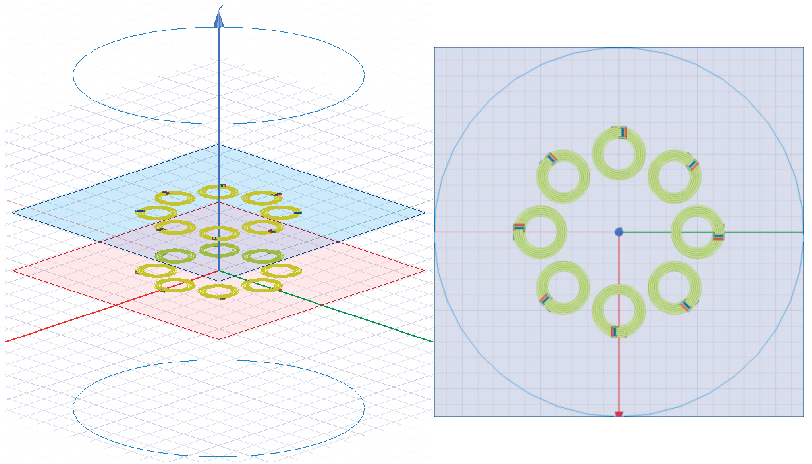}
\label{fig:HFSSUCA}
\end{minipage}
}
\centering
\caption{The OAM-NFC simulation model in HFSS.} \label{fig:HFSS_ model}
\end{figure}
Figure~\ref{fig:HFSS_ model} shows the simulation models of the coil and transceiver  coil rings. All the transmit and the receive coils share the same model as shown in Fig.~\ref{fig:HFSScoil}. We set the coil material as copper, the line width of each coil as $0.5$ mm, and the thickness as $0.1$ mm. We also set $K_t\hspace{-0.1cm}=\hspace{-0.1cm}K_r\hspace{-0.1cm}=\hspace{-0.1cm}5$ and $r_t\hspace{-0.1cm}=\hspace{-0.1cm}r_r\hspace{-0.1cm}=\hspace{-0.1cm}5$ mm. The matching parallel capacitance is set as $300$ pF with a resistance of $50 $ $\Omega$. In Fig.~\ref{fig:HFSSUCA}, we set $R_t\hspace{-0.1cm}=\hspace{-0.1cm}R_r\hspace{-0.1cm}=\hspace{-0.1cm}25$ mm, $N_t\hspace{-0.1cm}=\hspace{-0.1cm}N_r\hspace{-0.1cm}=\hspace{-0.1cm}8$, and $D$ as $25$ mm. The S-parameter matrix of ports, which represents the output to input voltage ratios across ports, is given as follows:
\small{
\begin{align}
\boldsymbol{\mathrm S}=
\begin{bmatrix}
S_{1,1} & S_{1,j} & \cdots & S_{1,N_r+N_t}\\
S_{2,1} & S_{2,j} & \cdots & S_{2,N_r+N_t}\\
\vdots & \vdots & \ddots & \vdots \\
S_{i,1} & S_{i,j} & \cdots & S_{i,N_r+N_t}\\
\vdots & \vdots & \ddots & \vdots \\
S_{N_r+N_t,1} & S_{N_r+N_t,j} & \cdots & S_{N_r+N_t,N_r+N_t}
\end{bmatrix},
\end{align}
}
where $S_{i,j}={V_j}/{V_i}$ with $V_i$ and $V_j$ ($1\hspace{-0.1cm}\le\hspace{-0.1cm} i,j\hspace{-0.1cm}\le\hspace{-0.1cm} N_r\hspace{-0.07cm}+\hspace{-0.07cm}N_t$) represent the voltages across ports. What needs to be noticed is that $\boldsymbol{\mathrm S}$ gives the relationships among all ports, including those between each two transmit coils and the return loss. Thus, $\boldsymbol{\mathrm H}$ is the bottom left quarter of $\boldsymbol{\mathrm S}$, and is rewritten as follows:
\begin{align}
\boldsymbol{\mathrm H}=
\begin{bmatrix}
S_{N_t+1,1}
& S_{N_t+1,2} & \cdots & S_{N_t+1,N_t}\\
S_{N_t+2,1}
& S_{N_t+2,2} & \cdots & S_{N_t+2,N_t}
\\ \vdots & \vdots & \ddots & \vdots \\ S_{N_t+N_r,1}
& S_{N_t+N_r,2} & \cdots & S_{N_t+N_r,N_t}
\end{bmatrix}.
\label{eq:H_S_relation}
\end{align}

\subsection{Verification of Feasibility}\label{sec:Verification}
We validate the feasibility of our proposed OAM-NFC system by simulating the phase structures and BERs. The phase structures at $5.8$ GHz and $13.56$ MHz have the same trend. First, we set the frequency as $5.8$ GHz to get more clear phase structures. Then, we set the frequency as $13.56$ MHz since this frequency is the most commonly used in NFC.

\begin{figure}[htbp]
\centering
\subfigure[OAM mode $l$=2.]{
\begin{minipage}{0.45\linewidth}
\centering
\includegraphics[scale=0.145]{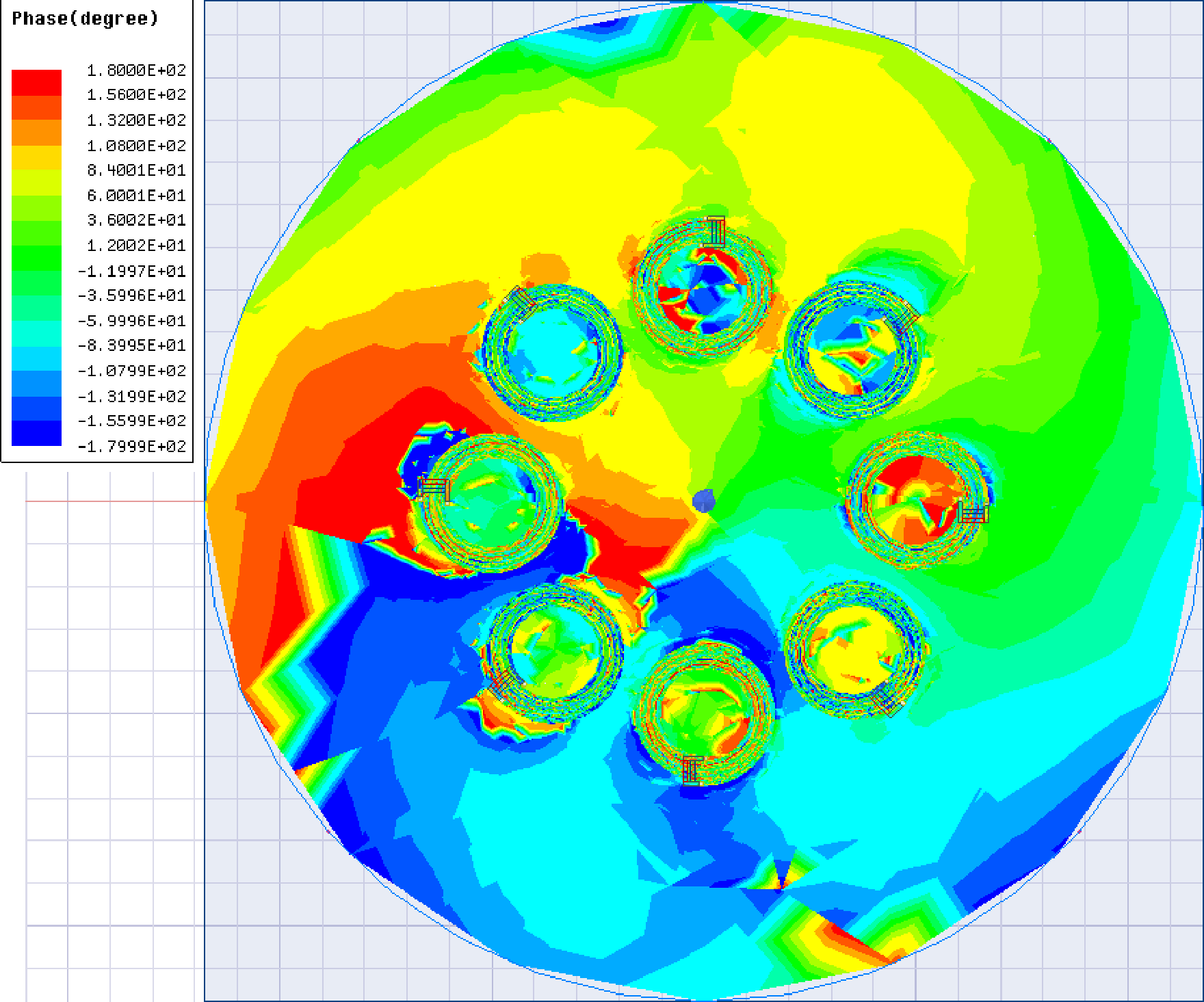}
\label{fig:phase_structure_2_2}
\end{minipage}
}
\subfigure[OAM mode $l$=3.]{
\begin{minipage}{0.45\linewidth}
\centering
\includegraphics[scale=0.145]{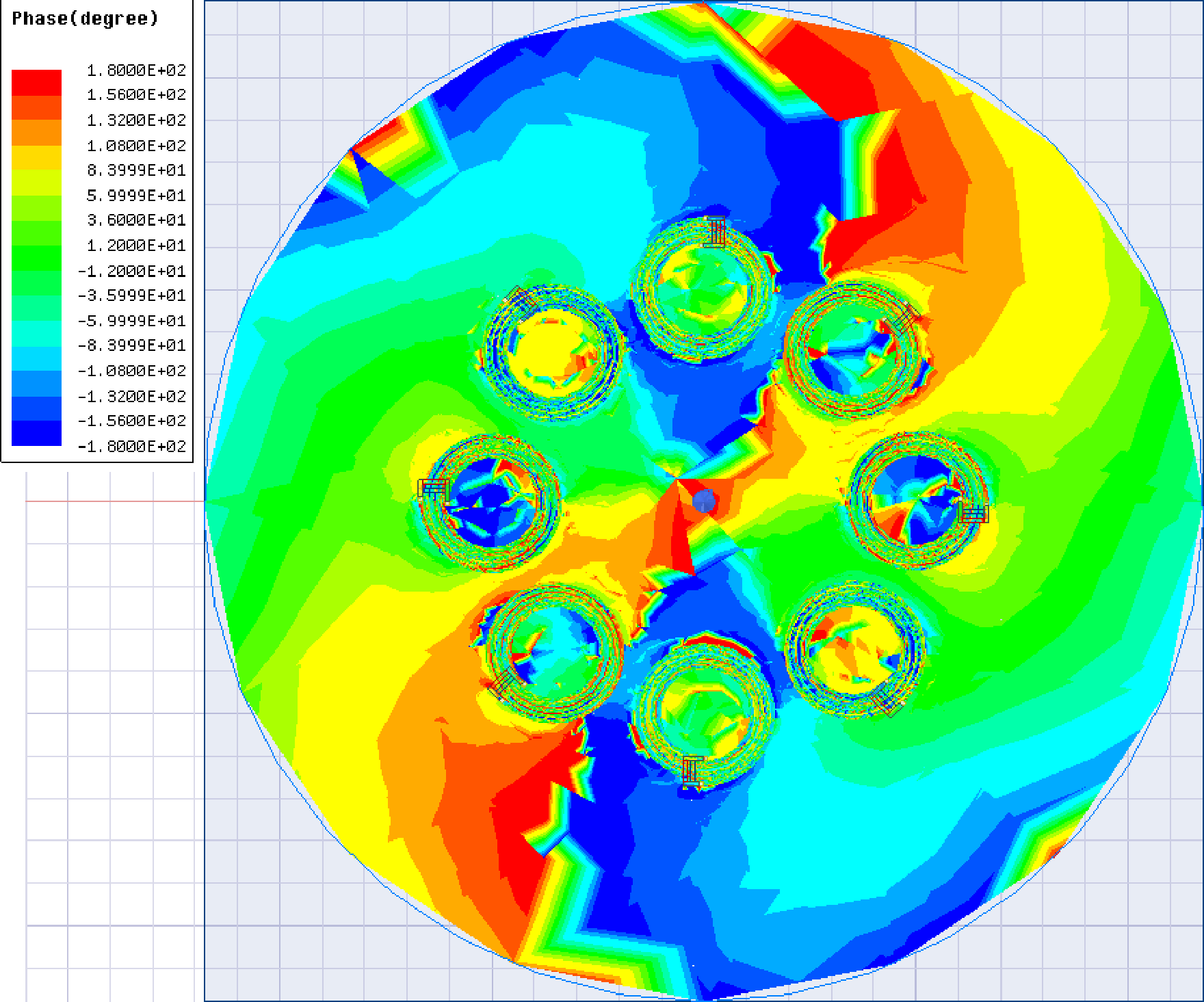}
\label{fig:phase_structure_2_3}
\end{minipage}
}
\centering
\caption{Phase structures of OAM-NFC with $l\hspace{-0.1cm}=\hspace{-0.1cm}2$ and $l\hspace{-0.1cm}=\hspace{-0.1cm}3$ for $D\hspace{-0.1cm}=\hspace{-0.1cm}25$ mm at $5.8$ GHz.} \label{fig:phase_structures}
\end{figure}

\begin{figure}[htbp]
\centering
\subfigure[OAM mode $l$=2.]{
\begin{minipage}{0.45\linewidth}
\centering
\includegraphics[scale=0.145]{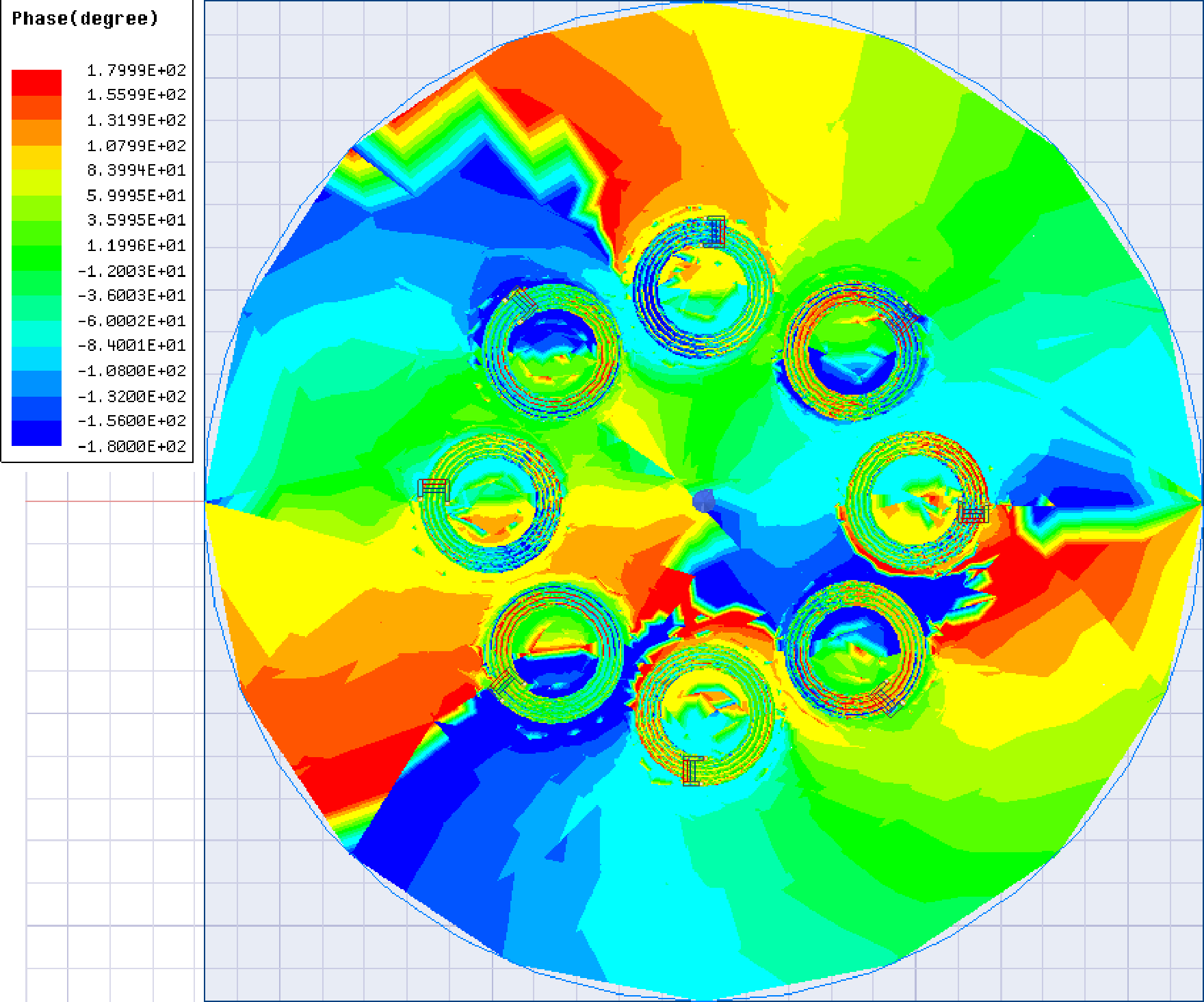}
\end{minipage}
}
\subfigure[OAM mode $l$=3.]{
\begin{minipage}{0.45\linewidth}
\centering
\includegraphics[scale=0.145]{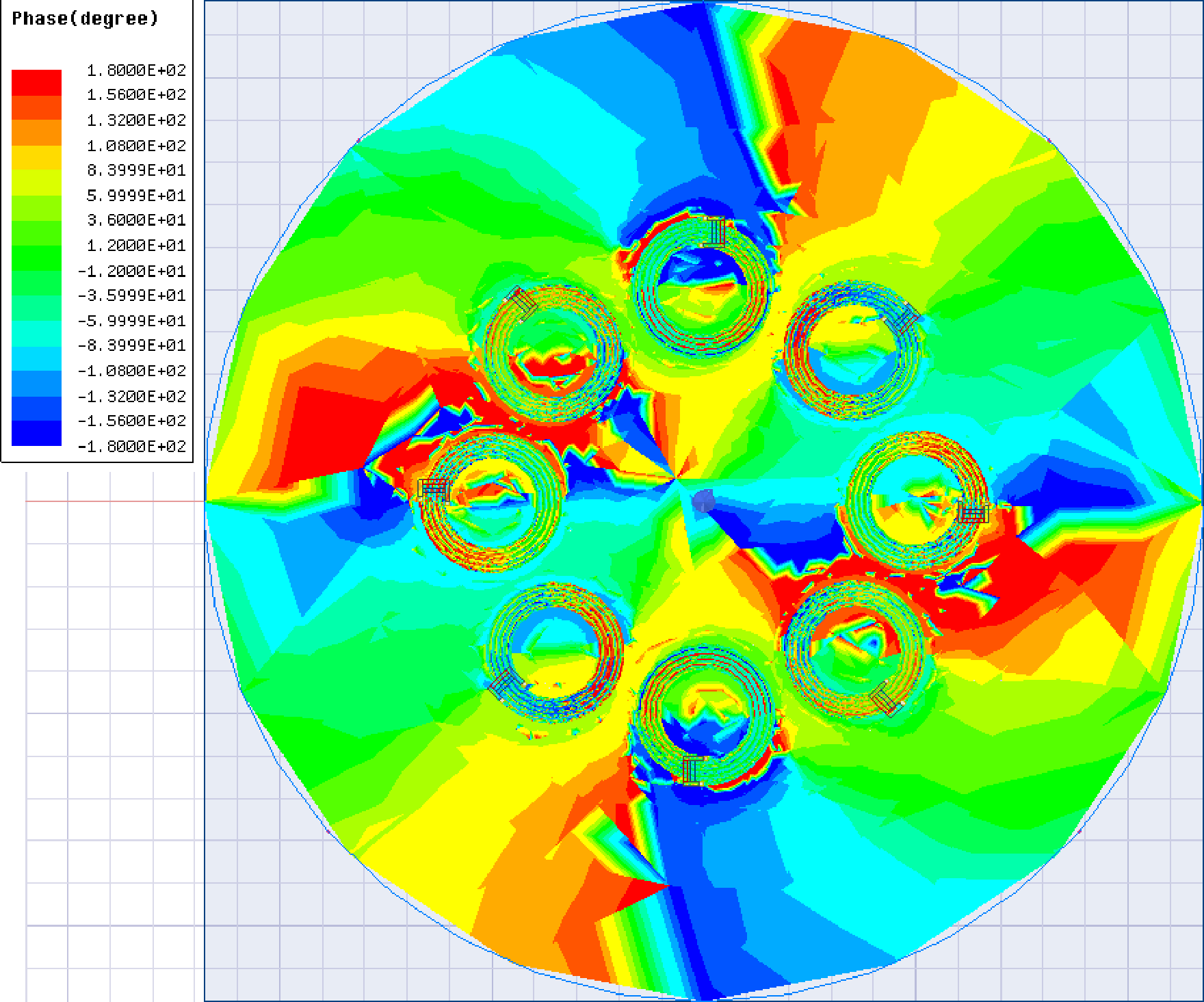}
\end{minipage}
}
\centering
\caption{Phase structures of OAM-NFC with $l\hspace{-0.1cm}=\hspace{-0.1cm}2$ and $l\hspace{-0.1cm}=\hspace{-0.1cm}3$ for $D\hspace{-0.1cm}=\hspace{-0.1cm}25$ mm at $13.56$ MHz.} \label{fig:phase_structures_1356}
\end{figure}
For $5.8$ GHz, the phase structures on the receive coil ring plane with $l\hspace{-0.1cm}=\hspace{-0.1cm}2$ and $l\hspace{-0.1cm}=\hspace{-0.1cm}3$ are shown in Fig.~\ref{fig:phase_structures}. The variation of color from blue to red, yellow, green, and back to blue corresponds to the change in phase of $2\pi$. The rotational phase structure, which is a typical feature of the OAM beam, can be clearly found in Fig.~\ref{fig:phase_structures}. The existence of rotational phase structures validates the feasibility of our proposed OAM-NFC system at $5.8$ GHz. For $13.56$ MHz, the typical rotational phase structures can still be found within the coil ring in Fig.~\ref{fig:phase_structures_1356}, validating that our proposed OAM-NFC system is feasible for aligned transceivers with $N_t\hspace{-0.1cm}=\hspace{-0.1cm}N_r$ at $5.8$ GHz and $13.56$ MHz.

\begin{figure}[htbp]
\centering
\subfigure[5.8 GHz.]{
\begin{minipage}{0.45\linewidth}
\centering
\includegraphics[scale=0.145]{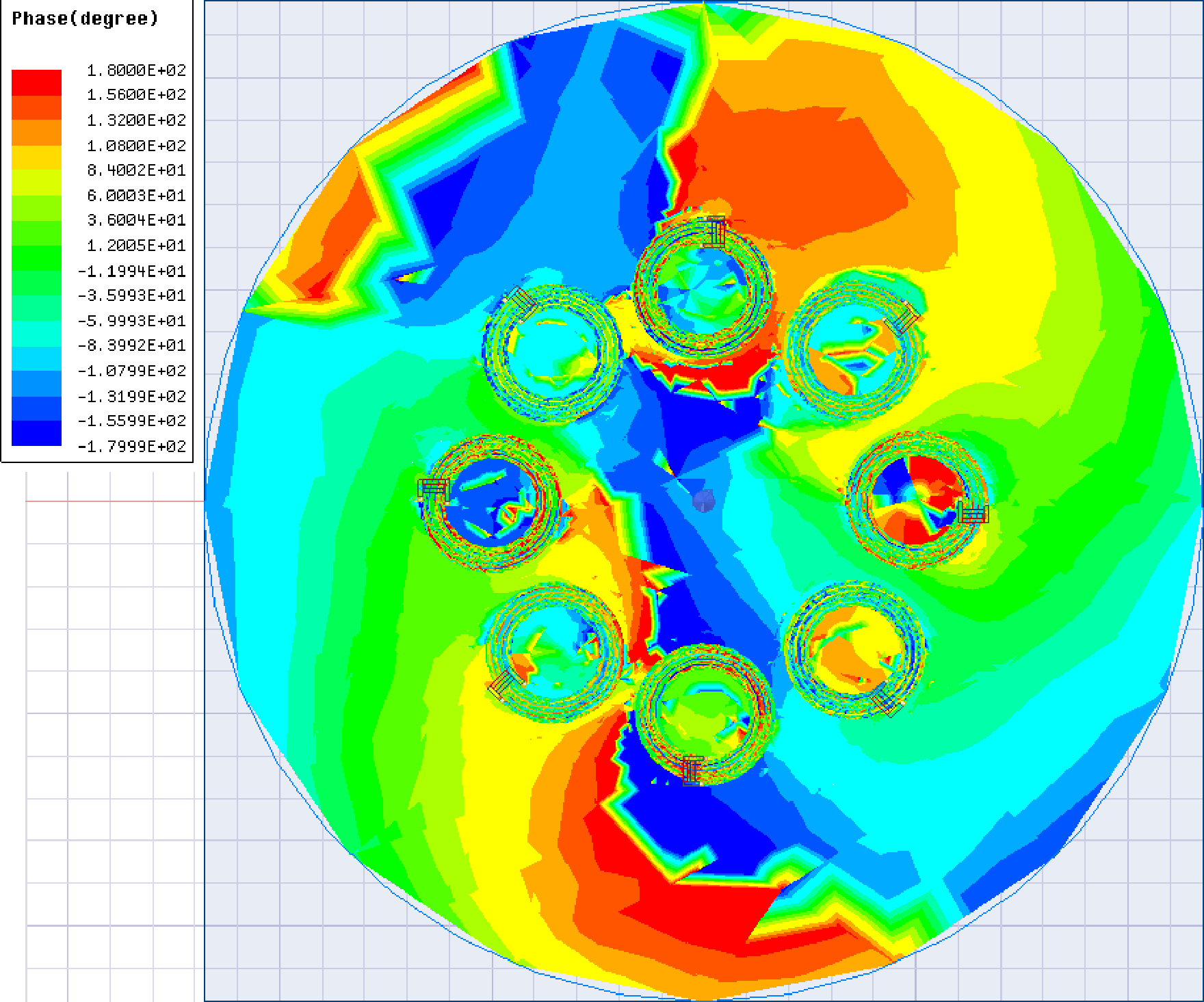}
\end{minipage}
}
\subfigure[13.56 MHz.]{
\begin{minipage}{0.45\linewidth}
\centering
\includegraphics[scale=0.145]{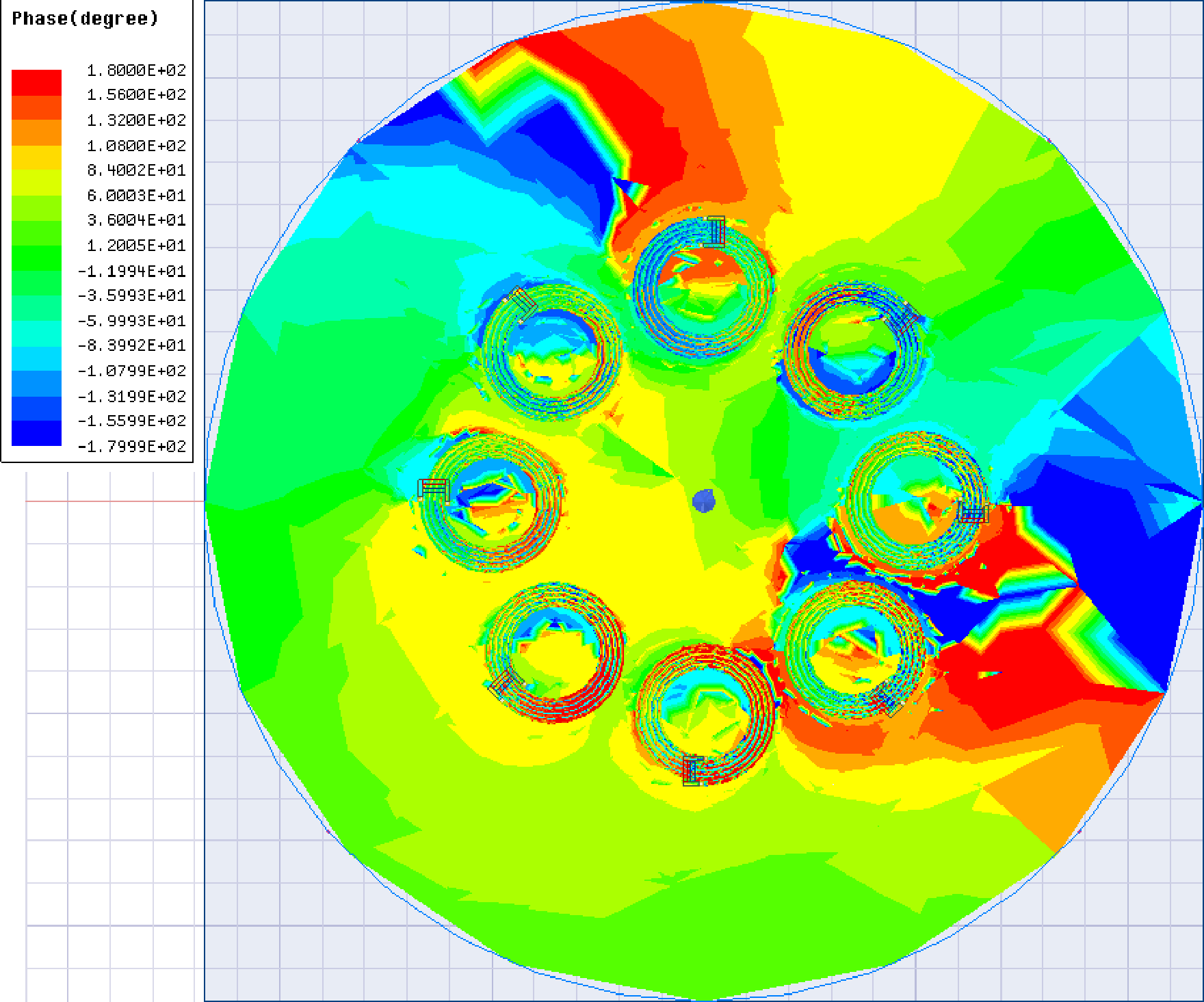}
\end{minipage}
}
\centering
\caption{Mixed phases of $l\hspace{-0.1cm}=\hspace{-0.1cm}2$ with $l\hspace{-0.1cm}=\hspace{-0.1cm}3$ for $D\hspace{-0.1cm}=\hspace{-0.1cm}25$ mm at $5.8$ GHz and $13.56$ MHz.} \label{fig:phase_structures_mixed}
\end{figure}
\begin{figure}[htbp]
\centering
\includegraphics[scale=0.6]{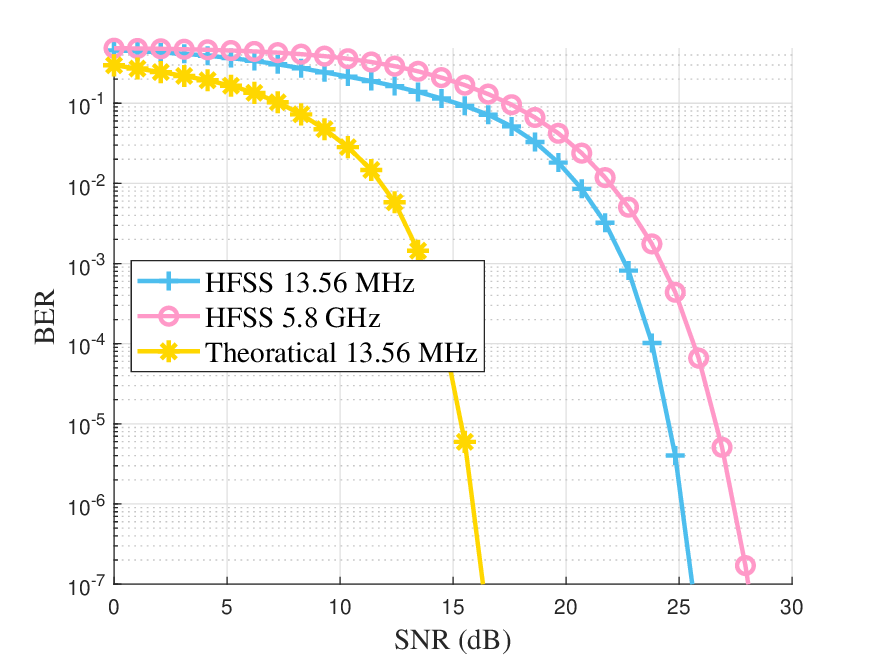}
\caption{BERs of OAM-NFC at $5.8$ GHz and $13.56$ MHz.}
\label{fig:BER}
\end{figure}
\begin{figure}[htbp]
\centering
\includegraphics[scale=0.6]{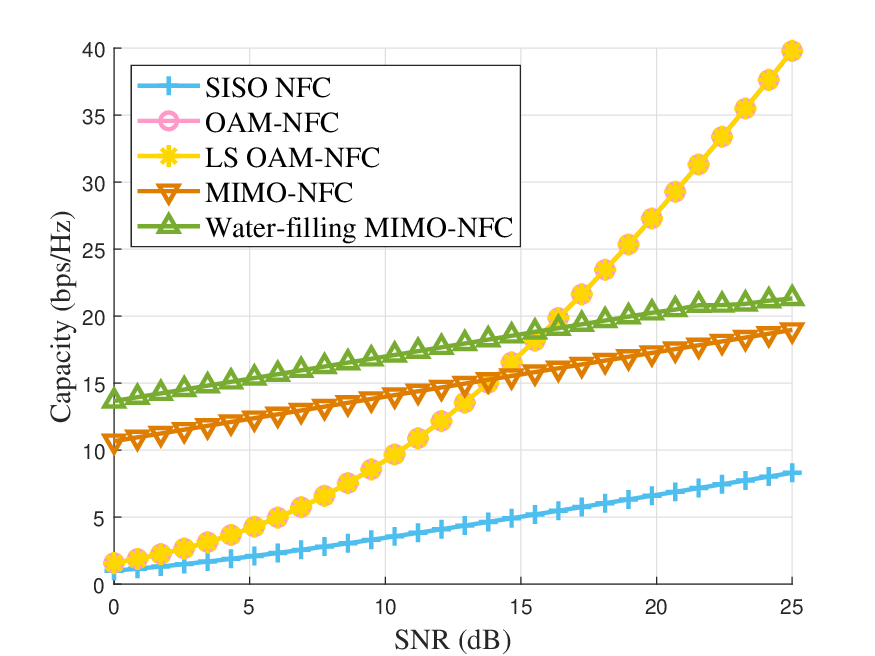}
\caption{Capacity comparison for OAM-NFC, SISO-NFC, and MIMO-NFC.}
\label{fig:capacity}
\end{figure}
\begin{figure*}[htbp]
\centering
\subfigure[Different deflection angles without channel estimation.]{
\begin{minipage}{0.45\linewidth}\label{fig:Capacity_HFSS_deg_noes}
\centering
\includegraphics[scale=0.55]{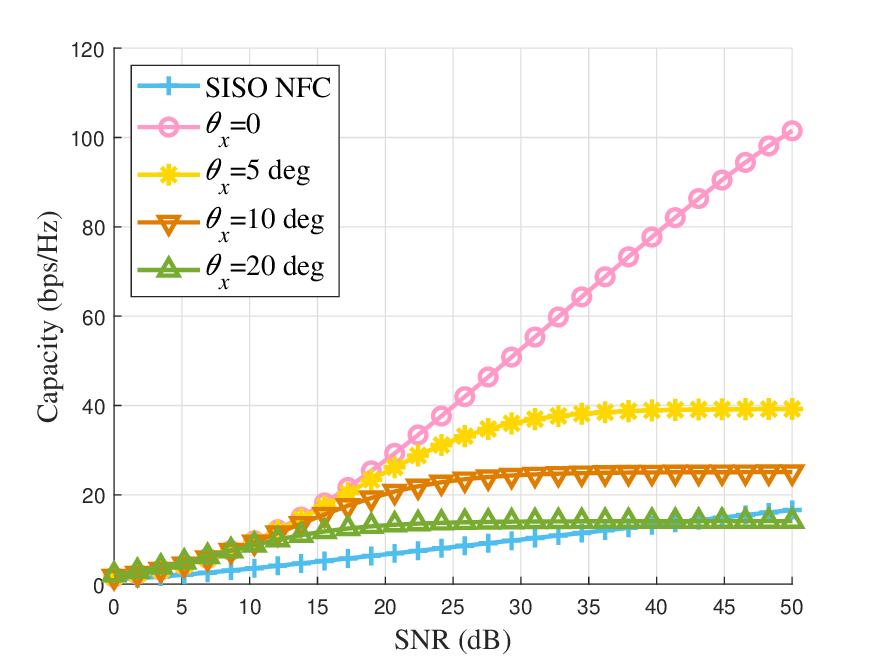}
\end{minipage}
}
\subfigure[Different deflection angles with channel estimation.]{
\begin{minipage}{0.45\linewidth}\label{fig:capacity_deg}
\centering
\includegraphics[scale=0.55]{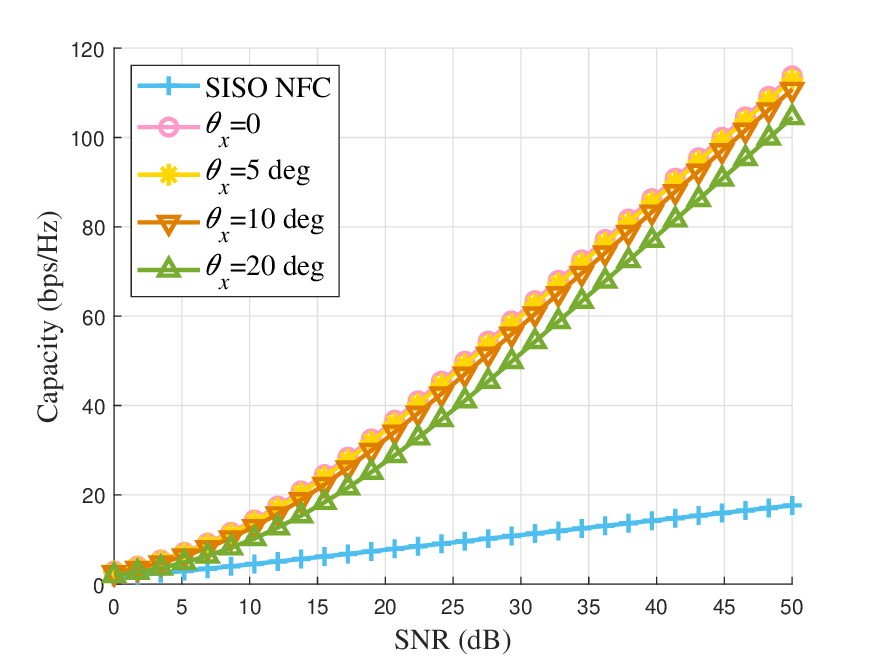}
\end{minipage}
}\\
\subfigure[Different deflection distances without channel estimation.]{
\begin{minipage}{0.45\linewidth}\label{fig:capacity_dx_noes}
\centering
\includegraphics[scale=0.55]{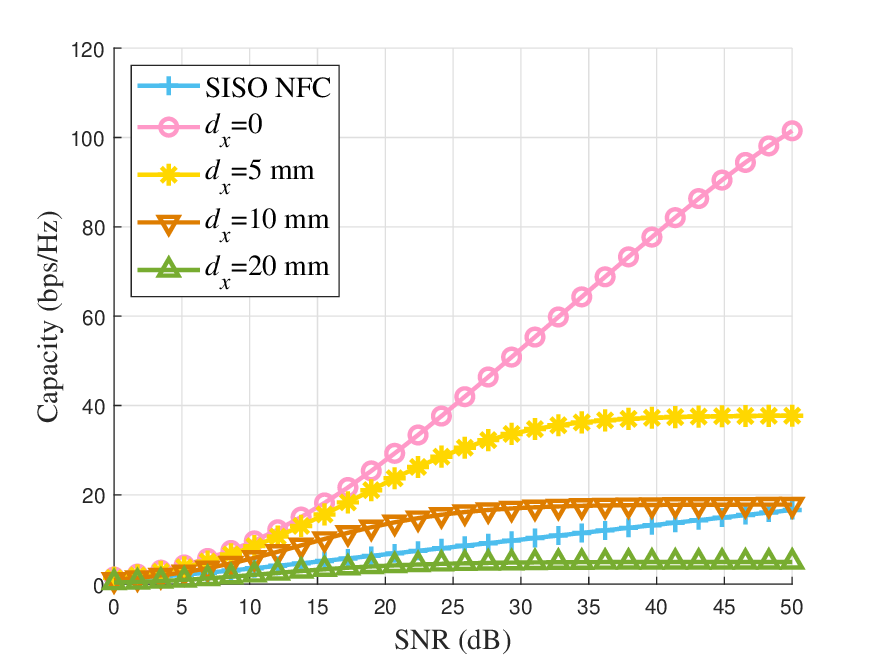}
\end{minipage}
}
\subfigure[Different deflection distances with channel estimation.]{
\begin{minipage}{0.45\linewidth}\label{fig:capacity_dx}
\centering
\includegraphics[scale=0.55]{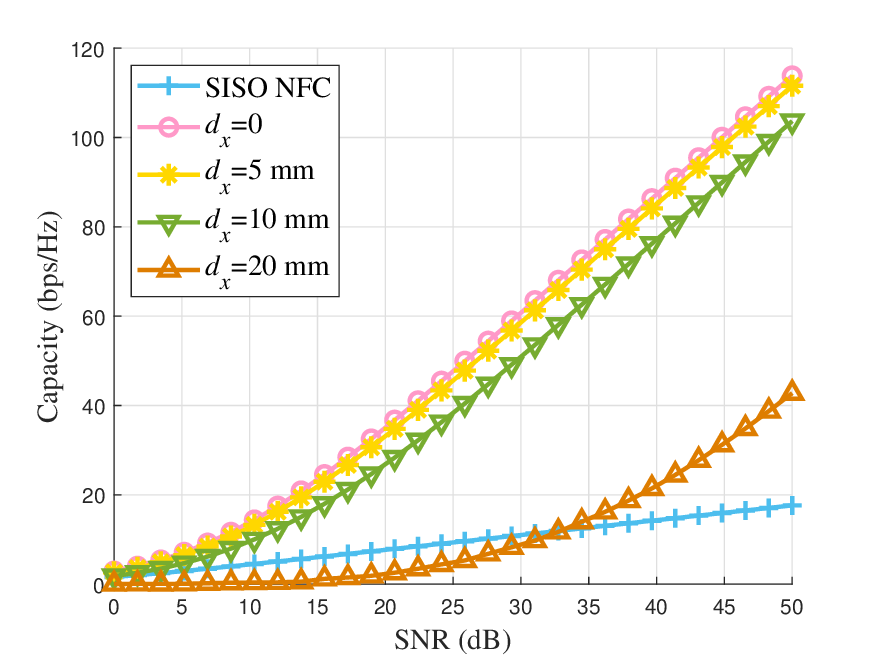}
\end{minipage}
}
\centering
\caption{The impacts of transceiver misalignment on the OAM-NFC capacity.}\label{fig:capacity_misalign}
\end{figure*}
Figure~\ref{fig:phase_structures_mixed} shows the mixed phases of $l\hspace{-0.1cm}=\hspace{-0.1cm}2$ with $l\hspace{-0.1cm}=\hspace{-0.1cm}3$ at $5.8$ GHz and $13.56$ MHz, respectively. The phase structures are irregular and BER analysis is used to validate the OAM-NFC feasibility. Fig.~\ref{fig:BER} plots BERs of OAM-NFC at $5.8$ GHz and $13.56$ MHz without channel estimation. In Fig.~\ref{fig:BER}, we set $P_t$ as $8$ W and let $P_n\hspace{-0.1cm}=\hspace{-0.1cm}P_t/$SNR with SNR varying from $0$ to $30$ dB. Fig.~\ref{fig:BER} shows that OAM-NFC with frequency at $13.56$ MHz has better BER performance than that at $5.8$ GHz. This is because the resonance points of the coils are set at $13.35$ MHz. For SNR larger than $17$ dB, the BER of theoretical results at $13.56$ MHz is less than $10^{-7}$. For SNR larger than $28$ dB, BERs of HFSS simulated results at both $13.56$ MHz and $5.8$ GHz are also less than $10^{-7}$. The BER simulation results validate the feasibility of our proposed OAM-NFC system. 

\subsection{Verification of Channel Capacity Enhancement}
Figure~\ref{fig:capacity} plots the OAM-NFC capacity as well as those of SISO-NFC and MIMO-NFC simulated by HFSS. In Fig.~\ref{fig:capacity}, we set the frequency as $13.56$ MHz and let SNR vary from $0$ to $25$ dB. Other variables remain the same as those in Section \ref{sec:HFSS_model}. Fig.~\ref{fig:capacity} shows that OAM-NFC has the same capacity with and without channel estimation under the aligned condition. Also, it can be found that the OAM-NFC capacity is much higher than those of SISO-NFC and MIMO-NFC with and without water-filling power allocation for SNR $>15$ dB. This is because, although the diversity gain achieved by MIMO technology makes the capacity of MIMO-NFC higher than that of OAM-NFC in the low-SNR regime, with SNR increased, OAM-NFC outperforms MIMO-NFC due to its higher multiplexing gain. Since NFC is usually used in high SNR scenarios, our proposed OAM-NFC can significantly enhance the capacity of NFC systems.

\subsection{Impact of Misalignment}
In Fig.~\ref{fig:capacity_misalign}, we simulate how transceiver misalignment impacts the OAM-NFC capacity. Figs.~\ref{fig:Capacity_HFSS_deg_noes} and \ref{fig:capacity_deg} show how non-parallel transceivers impact the OAM-NFC capacity with $\theta_x\hspace{-0.1cm}=\hspace{-0.1cm}\theta_y\hspace{-0.1cm}=\hspace{-0.1cm}0$, $5$, $10$, and $20$ degrees while the value of SNR varies from $0$ to $50$ dB. Other variables remain the same as those in Section~\ref{sec:HFSS_model}. As illustrated in Fig.~\ref{fig:Capacity_HFSS_deg_noes}, the OAM-NFC capacity without channel estimation sharply decreases as $\theta$ increases, which is consistent with the numerous analyses of Fig.~\ref{fig:Capacity_theo_W_20dB} in Section~\ref{sec:capacity_analyses}. Moreover, OAM-NFC with a larger $\theta$ obtains its maximum capacity in a lower SNR than that with a smaller $\theta$. For SNR $>40$ dB as well as $\theta\hspace{-0.1cm}=\hspace{-0.1cm}20$ degrees, SISO-NFC has a larger capacity than the OAM-NFC without channel estimation. This is because the interferences among different OAM modes, which are caused by the non-parallel between transmit and receive coil rings, increase as $\theta$ increases. Fig.~\ref{fig:capacity_deg} shows that the non-parallel transceivers only reduce the OAM-NFC capacity slightly, which is consistent with the numerous analyses of Fig.~\ref{fig:Capacity_theo_LS_20dB} and verifies the robustness of our proposed OAM-NFC.
Figs.~\ref{fig:capacity_dx_noes} and \ref{fig:capacity_dx} show how non-coaxial transceivers impact the OAM-NFC capacity with $d_x\hspace{-0.1cm}=\hspace{-0.1cm}d_y\hspace{-0.1cm}=\hspace{-0.1cm}0$, $5$, $10$, and $20$ mm. Other variables remain the same as those in Section~\ref{sec:HFSS_model}. In Fig.~\ref{fig:capacity_dx_noes}, the capacity of OAM-NFC without channel estimation decreases more sharply as $d_x$ increases than as $\theta$ increases. For SNR $>50$ dB as well as $d_x\hspace{-0.1cm}=\hspace{-0.1cm}20$ mm, SISO-NFC has a larger capacity than the OAM-NFC without channel estimation. Fig.~\ref{fig:capacity_dx} shows that except for $d_x\hspace{-0.1cm}=\hspace{-0.1cm}20$ mm, the non-coaxial transceivers only reduce the OAM-NFC capacity slightly. Simulation results in Figs.~\ref{fig:capacity_dx_noes} and \ref{fig:capacity_dx} are consistent with the analyses in Section~\ref{sec:capacity_analyses}, verifying that channel estimation can significantly improve the robustness of OAM-NFC against the misalignment.

\begin{figure}[htbp]
\centering
\subfigure[OAM-NFC capacities with different numbers of transmit and receive coils.]{
\begin{minipage}{1\linewidth}\label{fig:capacity_N_M}
\centering
\includegraphics[scale=0.6]{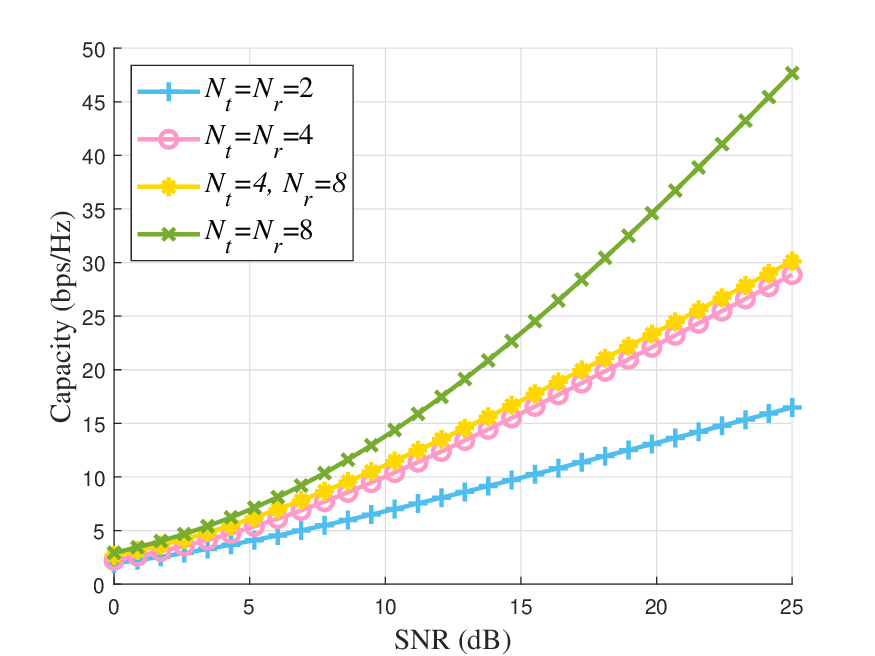}
\end{minipage}
}\\
\subfigure[OAM-NFC capacities with different transceiver distances.]{
\begin{minipage}{1\linewidth}\label{fig:capacity_D}
\centering
\includegraphics[scale=0.6]{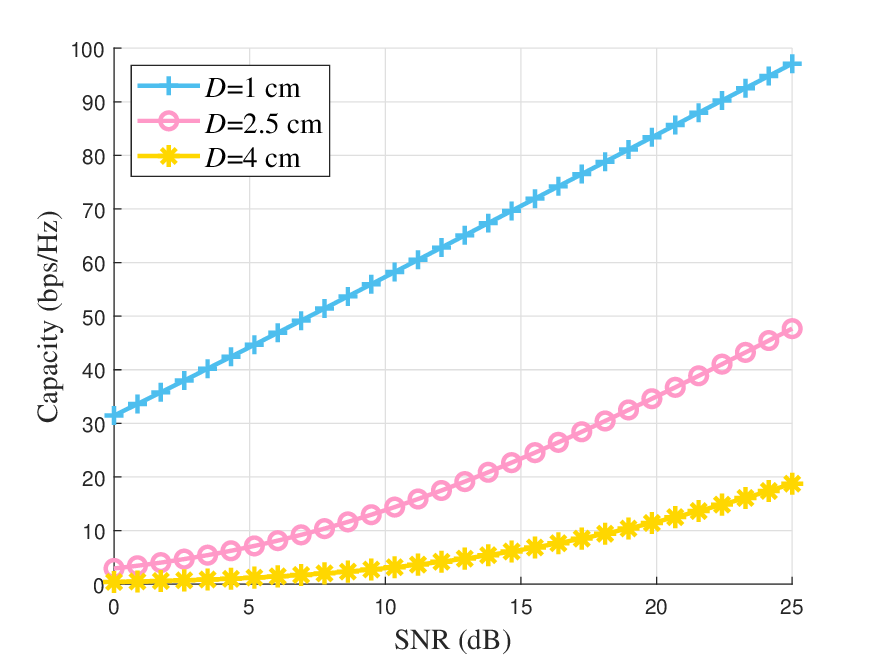}
\end{minipage}
}
\centering
\caption{The impact of transceiver coil numbers and transceiver distance on the OAM-NFC capacity.}
\end{figure}

\subsection{Impact of Transmit and Receive Coil Numbers on Capacity}
Figure~\ref{fig:capacity_N_M} plots the OAM-NFC capacity simulation results of different numbers of transmit and receive coils at $13.56$ MHz with channel estimation. Other variables remain the same as those in Section \ref{sec:HFSS_model}. We compare OAM-NFC capacities for $N_t\hspace{-0.1cm}=\hspace{-0.1cm}N_r\hspace{-0.1cm}=\hspace{-0.1cm}2$, $N_t\hspace{-0.1cm}=\hspace{-0.1cm}N_r\hspace{-0.1cm}=\hspace{-0.1cm}4$, $N_t\hspace{-0.1cm}=\hspace{-0.1cm}4, N_r\hspace{-0.1cm}=\hspace{-0.1cm}8$, and $N_t\hspace{-0.1cm}=\hspace{-0.1cm}N_r\hspace{-0.1cm}=\hspace{-0.1cm}8$ with SNR varying from $0$ to $25$ dB. Simulation results in Fig.~\ref{fig:capacity_N_M} are consistent with the analyses in Section \ref{sec:Capacity}. Moreover, for $N_t\hspace{-0.1cm}=\hspace{-0.1cm}4$, increasing $N_r$ from $4$ to $8$ does not increase the capacity significantly. Therefore, a large number of $N_r$ is unnecessary for increasing the OAM-NFC capacity when $N_t$ is small.

\subsection{Impact of Transceiver Distance on Capacity}
Figure~\ref{fig:capacity_D} plots the OAM-NFC capacity simulation results of different transceiver distances at $13.56$ MHz with channel estimation. Other variables remain the same as those in Section \ref{sec:HFSS_model}. In Fig.~\ref{fig:Capacity_theo_D}, we compare OAM-NFC capacities for $D\hspace{-0.1cm}=\hspace{-0.1cm}1$, $2.5$, and $4$ cm with SNR varying from $0$ to $50$ dB. Simulation results in Fig.~\ref{fig:capacity_D} are consistent with the analyses in Section \ref{sec:Capacity}. 
Fig.~\ref{fig:capacity_D} verifies that it is better to use our proposed OAM-NFC in near-distance scenarios than far-distance scenarios.

\begin{figure}[htbp]
\centering
\subfigure[OAM-NFC capacities with different transmit and receive coil ring radii.]{
\begin{minipage}{1\linewidth}\label{fig:Capacity_HFSS_R}
\centering
\includegraphics[scale=0.53]{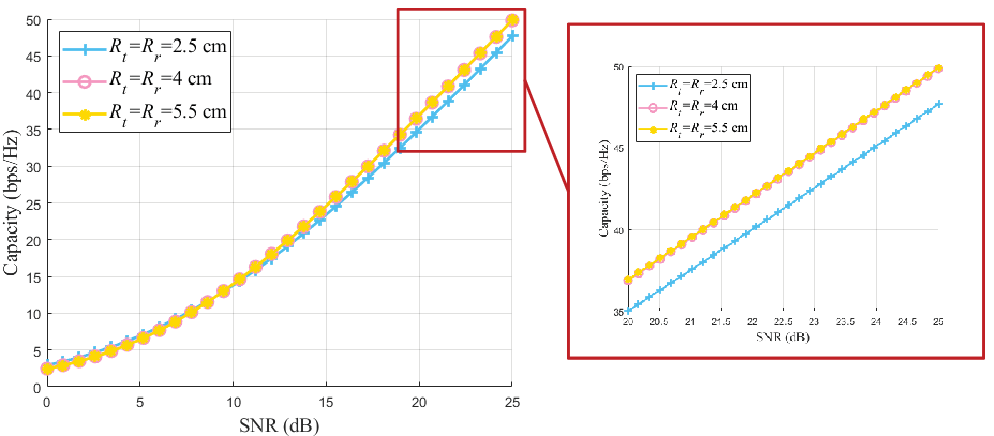}
\end{minipage}
}\\
\subfigure[OAM-NFC capacities with different transmit and receive coil radii.]{
\begin{minipage}{1\linewidth}\label{fig:Capacity_HFSS_rr}
\centering
\includegraphics[scale=0.6]{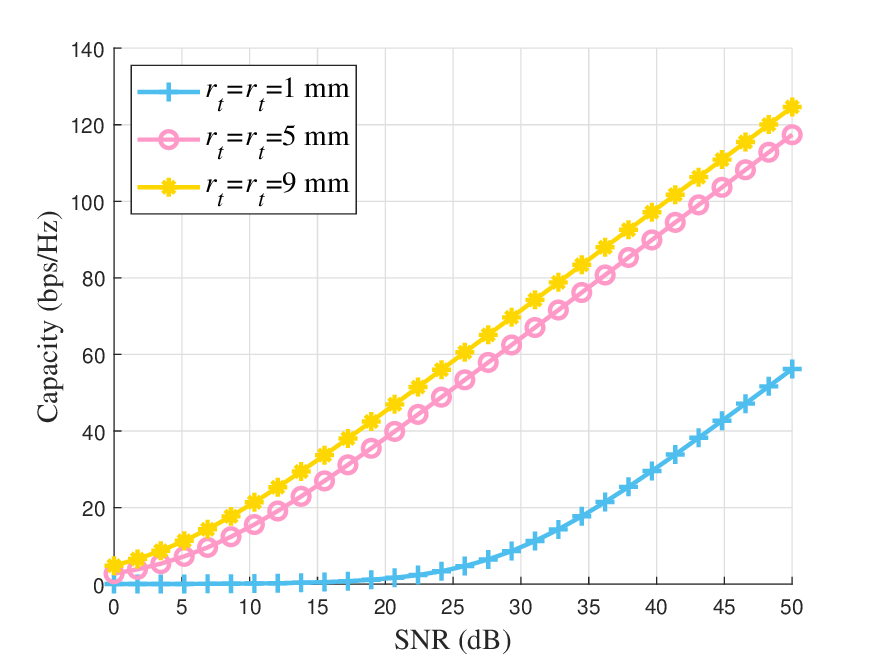}
\end{minipage}
}
\centering
\caption{The impact of transceiver ring radii and coil radii on the OAM-NFC capacity.}
\end{figure}

\subsection{Impact of Transmit and Receive Coil Ring Radii on Capacity}
Figure~\ref{fig:Capacity_HFSS_R} plots the OAM-NFC capacity simulation results of different transmit and receive coil ring radii at $13.56$ MHz with channel estimation. Other variables remain the same as those in Section \ref{sec:HFSS_model}. In Fig.~\ref{fig:Capacity_HFSS_R}, we compare OAM-NFC capacities with $R_t\hspace{-0.1cm}=\hspace{-0.1cm}R_r\hspace{-0.1cm}=\hspace{-0.1cm}2.5$, $4$ cm, and $5.5$ cm. Capacities of different coil ring radii are almost the same for SNR $<20$ dB. For SNR $\ge 20$ dB, OAM-NFC with $R_t\hspace{-0.1cm}=\hspace{-0.1cm}R_r\hspace{-0.1cm}=\hspace{-0.1cm}2.5$ cm has the minimum capacity while OAM-NFC with $R_t\hspace{-0.1cm}=\hspace{-0.1cm}R_r\hspace{-0.1cm}=\hspace{-0.1cm}4$ cm has the maximum capacity. This is because larger $R_t$ and $R_r$ lead to weaker mutual inductance between each two transmit coils, which is consistent with the analyses in Section \ref{sec:Capacity}. The impact of transmit and receive coil ring radii is small and our proposed OAM-NFC can get a high capacity even with relatively small coil rings.

\subsection{Impact of Transmit and Receive Coil Radii on Capacity}
Figure~\ref{fig:Capacity_HFSS_rr} plots the OAM-NFC capacity simulation results of different transmit and receive coil radii at $13.56$ MHz with channel estimation. Other variables remain the same as those in Section \ref{sec:HFSS_model}. In Fig.~\ref{fig:Capacity_HFSS_rr}, we compare OAM-NFC capacities with $r_t\hspace{-0.1cm}=\hspace{-0.1cm}r_r\hspace{-0.1cm}=\hspace{-0.1cm}1$, $5$, and $9$ mm. Simulation results in Fig.~\ref{fig:Capacity_HFSS_rr} are consistent with the analyses in Section \ref{sec:Capacity}. Since the capacity of $r_t\hspace{-0.1cm}=\hspace{-0.1cm}r_r\hspace{-0.1cm}=\hspace{-0.1cm}9$ mm is not significantly higher than that of $r_t\hspace{-0.1cm}=\hspace{-0.1cm}r_r\hspace{-0.1cm}=\hspace{-0.1cm}5$ mm, it is reasonable to adopt relatively small coils for saving space while still providing a high capacity.

\section{Conclusions}\label{sec:Conclusion}
In this paper, we modeled and analyzed the performance of the high-capacity OAM-NFC system and validated its feasibility as well as capacity enhancement. With coils circularly equipped at the transmitter and receiver, OAM signals can be transmitted, received, and detected. Based on the OAM-NFC mutual inductance channel model, we developed the OAM-NFC generation and detection schemes. Then, we derived and analyzed our proposed OAM-NFC capacity as well as those of SISO-NFC and MIMO-NFC. After that, we validated the feasibility and capacity enhancement of our proposed OAM-NFC with simulations. The impacts of the transceiver misalignment, transceiver coil numbers, transceiver distance, transceiver coil ring radii, and coils radii on the capacity of OAM-NFC were also analyzed. Simulation results show that our proposed OAM-NFC can be applied in NFC scenarios and increase the channel capacity significantly.

\begin{appendices}
\section{Proof for Theorem~\ref{the:channel_matrix}}\label{pro:multi-coil channel}
$M_{m,n}$ in Eq.~\eqref{eq:M} can be simplified as follows\cite{misalign_coils_M44,misalign_coils_M}:
\begin{align}
M_{m,n}=\frac{2\mu_0K_tK_r\sqrt{r_tr_r}}{\pi}\int_0^\pi \frac{\left(\cos\theta-\frac{d_{m,n}}{ r_r}\cos\phi\right)\Psi(k_{m,n})}{k_{m,n}\sqrt{V_{m,n}^3}}d\phi,
\label{eq:Mtr}
\end{align}
where
\begin{align}
\left\{\!\!
\begin{array}{ll}
d_{m,n}\hspace{-0.1cm}=\hspace{-0.1cm}\sqrt{\left(R_t \cos\frac{2\pi n}{N_t}-x_m\right)^2+\left(R_t \sin\frac{2\pi n}{N_t}-y_m\right)^2};\\
\\
V_{m,n}\hspace{-0.1cm}=\hspace{-0.1cm}\sqrt{1\hspace{-0.1cm}+\hspace{-0.1cm}\left({d_{m,n}}/{r_r}\right)^2\hspace{-0.1cm}-\hspace{-0.1cm}2\left({d_{m,n}}/{r_r}\right)\cos\phi\cos\theta\hspace{-0.1cm}-\hspace{-0.1cm}\cos^2\phi\sin^2\theta};\\
\\
k_{m,n}\hspace{-0.1cm}=\hspace{-0.1cm}2\sqrt{\frac{r_tr_rV_{m,n}}{\left(r_t+r_rV_{m,n}\right)^2+\left(z_m-r_r\cos\phi\sin\theta\right)^2}},
\end{array}
\label{eq:Mmnd}
\right.
\end{align}
and
\begin{align}
\left\{\!\!
\begin{array}{ll}
\Psi(\xi)=\left(1-\frac{\xi^2}{2}\right)K(\xi)-E(\xi);\\
\\
K(\xi)=\int_0^\pi\frac{1}{\sqrt{1-\xi^2\sin^2t}}dt;\\
\\
E(\xi)=\int_0^\pi\sqrt{1-\xi^2\sin^2t}dt.
\end{array}
\right.
\end{align}
In Eq.~\eqref{eq:Mmnd}, $(x_m,y_m,z_m)$ denotes the coordinate of the $m$th receive coil. For $\theta_x\hspace{-0.1cm}=\hspace{-0.1cm}\theta_y\hspace{-0.1cm}=\hspace{-0.1cm}0$, $(x_m,y_m,z_m)$ can be given as follows:
\begin{align}
\left\{\!\!
\begin{array}{ll}
x_m=d_x+R_r\cos\frac{2\pi m}{N_r};\\
y_m=d_y+R_r\sin\frac{2\pi m}{N_r};\\
z_m=D.
\end{array}
\right.
\end{align}
For non-zero $\theta_x$ or $\theta_y$, $(x_m,y_m,z_m)$ can be given as follows:
\begin{align}
&\left\{\!\!
\begin{array}{ll}
x_m\hspace{-0.1cm}=\hspace{-0.1cm}d_x\hspace{-0.1cm}+\hspace{-0.1cm}\left(\frac{\tan\theta_y}{\tan\theta}\right)R_r\cos\frac{2\pi m}{N_r}\hspace{-0.1cm}+\hspace{-0.1cm}\left(\frac{\tan\theta_x}{\tan\theta\sqrt{\tan^2\theta+1}}\right)R_r\sin\frac{2\pi m}{N_r};\\
y_m\hspace{-0.1cm}=\hspace{-0.1cm}d_y\hspace{-0.1cm}-\hspace{-0.1cm}\left(\frac{\tan\theta_x}{\tan\theta}\right)R_r\cos\frac{2\pi m}{N_r}\hspace{-0.1cm}+\hspace{-0.1cm}\left(\frac{\tan\theta_y}{\tan\theta\sqrt{\tan^2\theta+1}}\right)R_r\sin\frac{2\pi m}{N_r};\\
z_m\hspace{-0.1cm}=\hspace{-0.1cm}D\hspace{-0.1cm}-\hspace{-0.1cm}\left(\frac{\tan\theta}{\sqrt{\tan^2\theta+1}}\right)R_r\sin\frac{2\pi m}{N_r}.
\end{array}
\right.
\end{align}
Similarly to Eq.~\eqref{eq:Mtr}, $M^t_{n_1,n_2}$ in Eq.~\eqref{eq:Mt} with $n_1\hspace{-0.1cm}\ne\hspace{-0.1cm} n_2$ can be simplified as follows:
\begin{align}
M^t_{n_1,n_2}=\frac{2\mu_0K^2_tr_t}{\pi}\int_0^\pi \frac{\left(1-\frac{d^t_{n_1,n_2}}{ r_t}\cos\phi\right)\Psi(k^t_{n_1,n_2})}{k^t_{n_1,n_2}\sqrt{(V^t_{n_1,n_2})^3}}d\phi,
\label{eq:Mt2}
\end{align}
where
\begin{align}
\left\{\!\!
\begin{array}{ll}
d^t_{n_1,n_2}=R_t\sqrt{2-2\cos\left[\frac{2\pi}{N_t}\left(n_1-n_2\right)\right]};\\
\\
V^t_{n_1,n_2}=\sqrt{1+\left({d^t_{n_1,n_2}}/{r_r}\right)^2-2\left({d^t_{n_1,n_2}}/{r_r}\right)\cos\phi};\\
\\
k^t_{n_1,n_2}=2{\sqrt{V^t_{n_1,n_2}}}/{(1+V^t_{n_1,n_2})}.
\end{array}
\right.
\label{eq:MtdVk}
\end{align}

Based on the mutual inductances given above, we analyze the induced voltages across each transmit and receive coils. First, we analyze the induced voltage across each transmit coil, which is generated by other transmit coils. Let $v^t_{n_2}$ denote the source voltage across the $n_2$th transmit coil. The induced voltage generated by the $n_2$th transmit coil across the $n_1$th transmit coil, denoted by $v^s_{n_1,n_2}$, is derived as $v_{n_1,n_2}^s \hspace{-0.1cm}=\hspace{-0.1cm} \frac{-j\omega M^t_{n_1,n_2}}{Z_t}v^t_{n_2}$. Then, the voltage across the $n$th transmit coil, denoted by $v_n$, can be given as follows:
\begin{align}
v_n=v^t_n+\hspace{-0.2cm}\sum_{i=0,i\ne n}^{N-1}\hspace{-0.2cm}v^s_{n,i}=v^t_n+\hspace{-0.2cm}\sum_{i=0,i\ne n}^{N-1}\hspace{-0.2cm} \frac{-j\omega M^t_{n,i}}{Z_t}v^t_i.
\end{align}
Similarly, we give the induced voltage across each receive coil generated by all the transmit coils. We denote by $v_{m,n} \hspace{-0.1cm}=\hspace{-0.1cm} \frac{-j\omega M_{m,n}}{Z_t}v_n$ the induced voltage generated by the $n$th transmit coil across the $m$th receive coil. Therefore, the total induced voltage across the $m$th receive coil can be derived as follows:
\begin{align}
v^r_m&=\sum_{n=0}^{N-1}\frac{-j\omega M_{m,n}}{Z_t}v_n
\nonumber\\&=\sum_{n=0}^{N-1}\frac{-j\omega M_{m,n}}{Z_t}\left(v^t_n+\sum_{i=0,i\ne n}^{N-1} \frac{-j\omega M^t_{n,i}}{Z_t}v^t_i\right)
\nonumber\\&=\frac{-j\omega }{Z_t}\sum_{n=0}^{N-1}M_{m,n}v^t_n-\frac{\omega^2}{Z_t^{2}}\sum_{n=0}^{N-1}M_{m,n}\hspace{-0.2cm}\sum_{i=0,i\ne n}^{N-1}\hspace{-0.2cm}M^t_{n,i}v^t_i.
\label{eq:induced_V}
\end{align}
Eq.~\eqref{eq:induced_V} can be expressed in matrix form as follows:
\begin{align}
\boldsymbol{\mathrm v}^r=\frac{-j\omega}{Z_t}\boldsymbol{\mathrm M}\boldsymbol{\mathrm v}^t-\frac{\omega^2}{Z_t^{2}}\boldsymbol{\mathrm M}\boldsymbol{\mathrm M}^t\boldsymbol{\mathrm v}^t,
\label{eq:vr_vt}
\end{align}
where the main diagonal elements of $\boldsymbol{\mathrm M}^t$ are set as $0$. Thus, $\boldsymbol{\mathrm H}$ can be given as follows:
\begin{align}
\boldsymbol{\mathrm H}=\frac{-j\omega}{Z_t}\boldsymbol{\mathrm M}-\frac{\omega^2}{Z_t^{2}}\boldsymbol{\mathrm M}\boldsymbol{\mathrm M}^t.
\end{align}

\section{Proof for Theorem~\ref{the:circulant_matrix}}\label{pro:circulant channel}
First, we prove that $\boldsymbol{\mathrm M}^t$ is a circulant matrix. Based on Eq.~\eqref{eq:MtdVk}, since $d^t_{n_1,n_2}\hspace{-0.1cm}=\hspace{-0.1cm}d^t_{n_1+1,n_2+1}$, we can obtain $V^t_{n_1,n_2}\hspace{-0.1cm}=\hspace{-0.1cm}V^t_{n_1+1,n_2+1}$ and $k^t_{n_1,n_2}\hspace{-0.1cm}=\hspace{-0.1cm}k^t_{n_1+1,n_2+1}$. Thus, $\boldsymbol{\mathrm M}^t$ is a circulant matrix, given as follows:
\small{
\begin{align}
\boldsymbol{\mathrm M}^t\hspace{-0.1cm}=\hspace{-0.1cm}
\begin{bmatrix}
M^t_{1,1} & M^t_{1,2} & \cdots & M^t_{1,N_t}\\
M^t_{1,N_t} & M^t_{1,1} & \cdots & M^t_{1,N_t-1}\\
\vdots & \vdots & \ddots & \vdots \\
M^t_{1,2}  & M^t_{1,3} & \cdots & M^t_{1,1}
\end{bmatrix}.
\label{eq:Mnnb}
\end{align}
}

\begin{figure*}[ht]
\begin{align}
\setcounter{equation}{50}
\widehat{C}_{\rm{OAM}}\hspace{-0.1cm}&\le\hspace{-0.1cm}\sum_{l=0}^{N_t-1}{\rm{log}}\hspace{-0.1cm}\left(1\hspace{-0.1cm}+\hspace{-0.1cm}\frac{P_tI}{N_0N_t}\frac{4\mu^2_0\omega^2K_t^2K_r^2r_tr_r}{\pi^2|Z_t|^2N_r}\left|N_r\sqrt{\sum^{N_r}_{m=1}\frac{\left|e^{-jl\frac{2\pi m}{N_r}}\hspace{-0.1cm}\int_0^\pi \frac{\left(1-d_{m,1}/r_t\right)\cos\phi\Psi(\widehat k_{m,1})}{\widehat k_{m,1}\sqrt{\widehat V_{m,1}^3}}d\phi\right|^2}{N_r}}\right|^2\right)\nonumber\\
\hspace{-0.1cm}&=\hspace{-0.1cm}\sum_{l=0}^{N_t-1}{\rm{log}}\hspace{-0.1cm}\left(1\hspace{-0.1cm}+\hspace{-0.1cm}\frac{P_tI}{N_0N_t}\frac{4\mu^2_0\omega^2K_t^2K_r^2r_tr_r}{\pi^2|Z_t|^2}\sum^{N_r}_{m=1}\left|e^{-jl\frac{2\pi m}{N_r}}\hspace{-0.1cm}\int_0^\pi \frac{\left(1-d_{m,1}/r_t\right)\cos\phi\Psi(\widehat k_{m,1})}{\widehat k_{m,1}\sqrt{\widehat V_{m,1}^3}}d\phi\right|^2\right)\nonumber\\
\hspace{-0.1cm}&=\hspace{-0.1cm}N_t{\rm{log}}\hspace{-0.1cm}\left(1\hspace{-0.1cm}+\hspace{-0.1cm}\frac{P_tI}{N_0N_t}\frac{4\mu^2_0\omega^2K_t^2K_r^2r_tr_r}{\pi^2|Z_t|^2}\sum^{N_r}_{m=1}\left|\int_0^\pi \frac{\left(1-d_{m,1}/r_t\right)\cos\phi\Psi(\widehat k_{m,1})}{\widehat k_{m,1}\sqrt{\widehat V_{m,1}^3}}d\phi\right|^2\right).
\label{eq:COAM2_upper_bound}
\end{align}
\hrulefill
\end{figure*}
\begin{figure*}[ht]
\begin{align}
\setcounter{equation}{52}
\widehat{C}_{\rm{OAM}}\hspace{-0.1cm}&\ge\hspace{-0.1cm}\sum_{l=0}^{N_t-1}{\rm{log}}\hspace{-0.1cm}\left(1\hspace{-0.1cm}+\hspace{-0.1cm}\frac{P_tI}{N_0N_t}\frac{4\mu^2_0\omega^2K_t^2K_r^2r_tr_r}{\pi^2|Z_t|^2N_r}\left|N_r\sqrt[N_r]{\prod^{N_r}_{m=1}e^{-jl\frac{2\pi m}{N_r}}\hspace{-0.1cm}\int_0^\pi \frac{\left(1-d_{m,1}/r_t\right)\cos\phi\Psi(\widehat k_{m,1})}{\widehat k_{m,1}\sqrt{\widehat V_{m,1}^3}}d\phi}\right|^2\right)\nonumber\\
\hspace{-0.1cm}&=\hspace{-0.1cm}\sum_{l=0}^{N_t-1}{\rm{log}}\hspace{-0.1cm}\left(1\hspace{-0.1cm}+\hspace{-0.1cm}\frac{P_tI}{N_0N_t}\frac{4\mu^2_0\omega^2K_t^2K_r^2r_tr_r}{\pi^2|Z_t|^2}N_r\left|\prod^{N_r}_{m=1}e^{-jl\frac{2\pi m}{N_r}}\hspace{-0.1cm}\sqrt[N_r]{\int_0^\pi \frac{\left(1-d_{m,1}/r_t\right)\cos\phi\Psi(\widehat k_{m,1})}{\widehat k_{m,1}\sqrt{\widehat V_{m,1}^3}}d\phi}\right|^2\right)\nonumber\\
\hspace{-0.1cm}&=\hspace{-0.1cm}N_t{\rm{log}}\hspace{-0.1cm}\left(1\hspace{-0.1cm}+\hspace{-0.1cm}\frac{P_tI^2}{N_0}\frac{4\mu^2_0\omega^2K_t^2K_r^2r_tr_r}{\pi^2|Z_t|^2}\prod^{N_r}_{m=1}\left|\sqrt[N_r]{\int_0^\pi \frac{\left(1-d_{m,1}/r_t\right)\cos\phi\Psi(\widehat k_{m,1})}{\widehat k_{m,1}\sqrt{\widehat V_{m,1}^3}}d\phi}\right|^2\right).
\label{eq:COAM2_lower_bound}
\end{align}
\hrulefill
\end{figure*}

Then, we prove that $\boldsymbol{\mathrm M}$ is a block circulant matrix when the transmit coil ring and the receive coil ring are aligned with each other. For aligned transceivers, we can obtain $d_x\hspace{-0.1cm}=\hspace{-0.1cm}d_y\hspace{-0.1cm}=\hspace{-0.1cm}0$ and $\theta_x\hspace{-0.1cm}=\hspace{-0.1cm}\theta_y\hspace{-0.1cm}=\hspace{-0.1cm}0$. Thus, Eq.~\eqref{eq:Mtr} can be rewritten as follows:
\begin{align}
\setcounter{equation}{47}
M_{m,n}=\frac{2\mu_0K_tK_r\sqrt{r_tr_r}}{\pi}\int_0^\pi \frac{\left(1-\frac{d_{m,n}}{ r_r}\cos\phi\right)\Psi(k_{m,n})}{k_{m,n}\sqrt{V_{m,n}^3}}d\phi,
\label{eq:MmnA}
\end{align}
where $d_{m,n}$, $V_{m,n}$ and $k_{m,n}$ are simplified as follows:
\begin{align}
\left\{\!\!
\begin{array}{ll}
d_{m,n}=\sqrt{R_t^2+R_r^2-2R_tR_r\cos[\frac{2\pi}{N_t}\left(n-m/I\right)]};\\
\\
V_{m,n}=\sqrt{1+\left({d_{m,n}}/{r_r}\right)^2-2\left({d_{m,n}}/{r_r}\right)\cos\phi};\\
\\
k_{m,n}=2\sqrt{\frac{\left({r_r}/{r_t}\right)V_{m,n}}{\left[1+\left({r_r}/{r_t}\right)V_{m,n}\right]^2+\left[D/r_t\right]^2}}.
\end{array}
\right.
\end{align}
Since $d_{m,n}\hspace{-0.1cm}=\hspace{-0.1cm}d_{m+I,n+1}$, we can obtain $V_{m,n}\hspace{-0.1cm}=\hspace{-0.1cm}V_{m+I,n+1}$ and $k_{m,n}\hspace{-0.1cm}=\hspace{-0.1cm}k_{m+I,n+1}$. Therefore, $\boldsymbol{\mathrm M}$ is a block circulant matrix. Thus, based on Eq.~\eqref{eq:channel_matrix} associated with Eqs.~\eqref{eq:Mnnb}, $\boldsymbol{\mathrm H}$ is also a block circulant matrix given as follows:
\begin{align}
\boldsymbol{\mathrm H}&\hspace{-0.1cm}=\hspace{-0.15cm}
\begin{bmatrix}
H_{1,1} &\hspace{-0.15cm} H_{1,2} &\hspace{-0.15cm} \cdots &\hspace{-0.15cm} H_{1,N_t}\\
H_{2,1} &\hspace{-0.15cm} H_{2,2} &\hspace{-0.15cm} \cdots &\hspace{-0.15cm} H_{2,N_t}\\
\vdots & \vdots & \ddots & \vdots \\
H_{N_r,1}  &\hspace{-0.1cm} H_{N_r,2} &\hspace{-0.1cm} \cdots &\hspace{-0.1cm} H_{N_r,N_t}
\end{bmatrix}
\hspace{-0.15cm}=\hspace{-0.15cm}
\begin{bmatrix}
\boldsymbol{\mathrm h}_1 &\hspace{-0.1cm} \boldsymbol{\mathrm h}_2 &\hspace{-0.1cm} \cdots &\hspace{-0.1cm} \boldsymbol{\mathrm h}_{N_t}\\
\boldsymbol{\mathrm h}_{N_t} &\hspace{-0.1cm} \boldsymbol{\mathrm h}_1 &\hspace{-0.1cm} \cdots &\hspace{-0.1cm} \boldsymbol{\mathrm h}_{N_t-1}\\
\vdots &\hspace{-0.1cm} \vdots &\hspace{-0.1cm} \ddots &\hspace{-0.1cm} \vdots \\
\boldsymbol{\mathrm h}_2 &\hspace{-0.1cm} \boldsymbol{\mathrm h}_3 &\hspace{-0.1cm} \cdots &\hspace{-0.1cm} \boldsymbol{\mathrm h}_1
\end{bmatrix}.
\end{align}

\section{Proof for Theorem~\ref{the:upper_lower}}\label{pro:COAM2_upper_lower}
First, we prove Eq.~\eqref{eq:COAM2_upper}. Because $\frac{1}{N}\sum^N_{n=1}a_n\hspace{-0.1cm}\le\hspace{-0.1cm}\sqrt{\frac{\sum^N_{n=1}}{N}a_n}$, the upper bound of $\widehat{C}_{\rm{OAM}}$ can be derived as Eq.~\eqref{eq:COAM2_upper_bound}.
Because $\cos\frac{2\pi (1-1)}{N_r}\hspace{-0.1cm}=\hspace{-0.1cm}1\hspace{-0.1cm}\ge\hspace{-0.1cm}\cos\frac{2\pi (m-1)}{N_r}$, $d_{m,1}$ achieves its minimum value with $m\hspace{-0.1cm}=\hspace{-0.1cm}I$ and is given as $d_{m,1}\hspace{-0.1cm}\ge\hspace{-0.1cm} \sqrt{R_r^2+R_t^2-2R_rR_t} \hspace{-0.1cm}=\hspace{-0.1cm} \left|R_r-R_t\right|$. Since for $r_r\hspace{-0.1cm}=\hspace{-0.1cm}r_t$, $M_{m,n}$ decreases as $d_{m,n}$ increases. Then, replacing $d_{m,1}$ with $\left|R_r-R_t\right|$, Eq.~\eqref{eq:COAM2_upper_bound} can be further derived as follows:
\begin{align}
\setcounter{equation}{51}
\widehat{C}_{\rm{OAM}}\hspace{-0.1cm}&<\hspace{-0.1cm}N_t{\rm{log}}\Bigg(1\hspace{-0.1cm}+\hspace{-0.1cm}\frac{P_tI^2}{N_0}\frac{4\mu^2_0\omega^2K_t^2K_r^2r_tr_r}{\pi^2|Z_t|^2}
\nonumber\\&\hspace{1.5cm}\left|\int_0^\pi \frac{\left(1-|R_r\hspace{-0.1cm}-\hspace{-0.1cm}R_t|/r_t\right)\cos\phi\Psi(\widehat k^{upper})}{\widehat k^{upper}\sqrt{(\widehat V^{upper})^3}}d\phi\right|^2\Bigg),
\end{align}
which is the upper limit of $\widehat{C}_{\rm{OAM}}$. Thus, Eq.~\eqref{eq:COAM2_upper} follows.

Then, we prove Eq.~\eqref{eq:COAM2_lower}. Because $\frac{1}{N}\sum^N_{n=1}a_n\hspace{-0.1cm}\ge\hspace{-0.1cm}\sqrt[N]{\prod^{N}_{n=1}a_n}$, the lower bound of $\widehat{C}_{\rm{OAM}}$ can be derived as Eq.~\eqref{eq:COAM2_lower_bound}.
Therefore, replacing $d_{m,1}$ with $R_r\hspace{-0.1cm}+\hspace{-0.1cm}R_t$, Eq.~\eqref{eq:COAM2_lower_bound} can be further derived as follows:
\begin{align}
\setcounter{equation}{53}
\widehat{C}_{\rm{OAM}}\hspace{-0.1cm}&>\hspace{-0.1cm}N_t{\rm{log}}\Bigg(1\hspace{-0.1cm}+\hspace{-0.1cm}\frac{P_tI^2}{N_0}\frac{4\mu^2_0\omega^2K_t^2K_r^2r_tr_r}{\pi^2|Z_t|^2}
\nonumber\\&\hspace{1.5cm}\left|\int_0^\pi \frac{\left[1- \left(R_r+R_t\right)/r_t\right]\cos\phi\Psi(\widehat k^{lower})}{\widehat k^{lower}\sqrt{(\widehat V^{lower})^3}}d\phi\right|^2\Bigg),
\end{align}
which is the lower limit of $\widehat{C}_{\rm{OAM}}$. Thus, Eq.~\eqref{eq:COAM2_lower} follows.
\end{appendices}

\bibliographystyle{IEEEtran}
\bibliography{References}

\ifCLASSOPTIONcaptionsoff
  \newpage
\fi

\end{document}